\newcommand{\fo}{\mathsf{FO}}
\newcommand{\dynfo}{\ensuremath{\mathsf{DynFO}}}
\newcommand{\C}{\mathcal{C}}
\newcommand{\dync}{\ensuremath{\mathsf{Dyn\C}}}
\newtheorem{theorem}{Theorem}
\newtheorem{lemma}[theorem]{Lemma}
\newtheorem{definition}[theorem]{Definition}
\newtheorem{remark}[theorem]{Remark}
\newtheorem{observation}[theorem]{Observation}
\newtheorem{proposition}[theorem]{Proposition}
\newcommand{\ACz}{\mbox{{\sf AC$^0$}}}
\newcommand{\TCz}{\mbox{{\sf TC$^0$}}}
\newcommand{\TC}{\mbox{{\sf TC}}}
\newcommand{\ACo}{\mbox{{\sf AC$^1$}}}
\newcommand{\NC}{\mbox{{\sf NC}}}
\newcommand{\NCo}{\mbox{{\sf NC$^1$}}}
\newcommand{\Log}{\mbox{{\sf L}}}
\newcommand{\AC}{\mbox{{\sf AC}}}
\newcommand{\DynTCz}{\mbox{{\sf DynTC$^0$}}}
\newcommand{\DynACz}{\mbox{{\sf DynAC$^0$}}}
\newcommand{\MatPow}{\textnormal{\textbf{MatPow}}}
\newcommand{\DynBipMatPow}{\textnormal{\textbf{DynBipMatPow}}}
\newcommand{\DynMatPow}{\textnormal{\textbf{DynMatPow}}}
\newcommand{\Det}{\textnormal{\textbf{Det}}}
\newcommand{\DetPoly}{\textnormal{\textbf{DetPoly}}}
\newcommand{\Interp}{\textnormal{\textbf{Interpolate}}}
\newcommand{\Div}{\textnormal{\textbf{Div}}}
\newcommand{\leqacz}{{\leq}^{\ACz}}
\newcommand {\Q} {{\mathbb{Q}}}
\newcommand {\N} {{\mathbb{N}}}
\title{Dynamic Complexity of Expansion}
\author{ 
Samir Datta\\
\texttt{sdatta@cmi.ac.in}
\and
Anuj Tawari\\
\texttt{atawari@cmi.ac.in}
\and
Yadu Vasudev\\
\texttt{yadu@cse.iitm.ac.in}
}
\begin{document}
\maketitle

\begin{abstract}
Dynamic Complexity was introduced by Immerman and Patnaik
\cite{PatnaikImmerman97} (see also \cite{DongST95}). It has seen a resurgence
of interest in the recent past, see 
\cite{DattaHK14,ZeumeS15,MunozVZ16,BouyerJ17,Zeume17,DKMSZ18,DMVZ18,BarceloRZ18,DMSVZ19,SchmidtSVZK20,DKMTVZ20} for some representative examples.
Use of linear algebra has been a notable feature of some of these
papers. We extend this theme to show that the gap version of spectral 
expansion in bounded degree graphs can be maintained in the class 
$\DynACz$ (also known as $\dynfo$, for domain independent queries) under 
batch changes (insertions and deletions) of 
$O(\frac{\log{n}}{\log{\log{n}}})$ many edges.

The spectral graph theoretic material of this work is based on the paper by
Kale-Seshadri \cite{KaleS11}. Our primary technical contribution is to
maintain up to logarithmic powers of the transition matrix of
a bounded degree undirected graph in $\DynACz$. 
\end{abstract}

\section{Introduction}
Computational complexity conventionally deals  
with problems in which the entire input is given
to begin with and does not change with time.
However, in practice, the input is not always static
and may undergo frequent changes with time.
 For instance, one may want to efficiently update the
result of a query under insertion or deletion
of tuples into a database. In such a scenario, recomputing
the solution from scratch after every update 
may be unnecessarily computation intensive. In this
work, we deal with problems whose solution can be maintained
by one of the simplest possible models of computation:
polynomial size boolean circuits of bounded depth.
The resulting complexity class $\DynACz$ is equivalent to 
Pure SQL in computational power when we think of
graphs (and other structures) encoded as a 
relational database. It is also surprisingly
powerful as witnessed by the result showing that
$\DynACz$ is strong enough to maintain transitive closure 
in directed graphs \cite{DKMSZ18}.
The primary idea in that paper was to reformulate the
the problem in terms of linear algebra. We follow the
same theme to show that expansion in bounded degree graphs
can be maintained in \DynACz.

\subsection{The model of dynamic complexity}
In the \emph{dynamic} (graph) model we start with an empty graph on
a fixed set of vertices. The graph evolves by the insertion/deletion of
a single edge in every time step and some property which can be
periodically queried, has to be maintained
by an algorithm. The dynamic complexity of the algorithm is the
static complexity for each step. If the updates and the queries
can be executed
in a static class $\mathcal{C}$ the dynamic problem is said to belong
to $\dync$. 
In this paper, $\mathcal{C}$ is often a complexity class defined in terms
of bounded depth circuits\footnote{We will have occasion to refer to the
({\sf dlogtime}-)uniform versions of these circuit
classes and we adopt the convention that, whenever unspecified,
we mean the uniform version.} such as $\ACz, \TCz$, where $\ACz$ is the 
class of  polynomial size constant depth circuits with AND and OR gates
of unbounded fan-in; 
$\TCz$ circuits may additionally have MAJORITY gates.
We encourage the reader to refer to any
textbook (e.g. Vollmer \cite{Vollmer}) for precise definitions of the standard circuit complexity classes. 
The model was first introduced by Immerman and Patnaik \cite{PatnaikImmerman97}
(see also Dong, Su, and Topor \cite{DongST95})
who defined the complexity class $\dynfo$ which is essentially 
equivalent\footnote{More precisely, the two classes are equivalent for all 
domain independent queries i.e. queries for which the answer to the query is 
independent of the size of the domain of the structure under question. We will
actually conflate $\dynfo$ with $\dynfo(<,+,\times)$ in which class order
$<$) and the corresponding addition ($+$) and multiplication ($\times$) 
are built in relations. We do so because we need to deal with multiple updates
where the presence of these relations is particularly helpful -- see the 
discussion in \cite{DMVZ18}. We do not need to care about these subtler
 distinctions when we deal with $\DynACz$ in any case.} to the uniform version 
of $\DynACz$. 
The circuit versions $\DynACz, \DynTCz$
were also investigated by Hesse and others \cite{Hes,DattaHK14}.

The archetypal example of a dynamic problem is maintaining
reachability (``is there a directed path from $s$ to $t$''),
in a  digraph. This problem has recently \cite{DKMSZ18} been shown
 to be maintainable in the class $\DynACz$ -
a class where edge insertions, deletions and reachability
queries can be maintained using $\ACz$ circuits. This answers an open question
from \cite{PatnaikImmerman97}. Even more
recently this result has been extended to batch changes of size 
$O(\frac{\log{n}}{\log{\log{n}}})$ (see \cite{DMVZ18}). In this
work, we study expansion testing under batch changes of size similar to above.

\subsection{Expansion testing in dynamic graphs}
In this paper we study the dynamic complexity of checking the expansion of a bounded-degree graph under edge updates. A bounded-degree graph $G$ is an expander if its second-largest eigenvalue $\lambda_G$ is bounded away from $1$. Expanders are a very useful class of graphs with a variety of applications in algorithms and computational complexity, for instance in derandomization. This arises due to the many useful properties of an expander such as the fact that an expander has no small cuts and that random walks on an expander mixes well.

Our aim is to dynamically maintain an approximation of the second largest eigenvalue of a dynamically changing graph in $\DynACz$. We show that for a graph $G$, we can answer if the second largest eigenvalue of the graph is less than a parameter $\alpha$ (meaning that $G$ is a good expander) or if $\lambda_G > \alpha'$ where $\alpha'$ is polynomially related to $\alpha$. The study of a related promise problem of testing expansion was initiated in the sparse model of property testing by Goldreich and Ron \cite{GR11}, and testers for spectral expansion by Kale and Seshadri \cite{KaleS11} and vertex expansion by Czumaj and Sohler
\cite{CS} are known.

Our algorithm is borrowed from the property testing algorithm of \cite{KaleS11} where it is shown that if $\lambda_G \leq \alpha$, then random walks of logarithmic length from every vertex in $G$ will converge to the uniform distribution. On the contrary if $\lambda_G \geq \alpha'$ then this is not the case for at least one vertex in $G$. The key technical contribution in the paper is a method to maintain the logarithmic powers of the normalized adjacency matrix of a dynamic graph when there are few edge modifications.

\subsection{Overview of the Algorithm}
The Kale-Seshadri algorithm \cite{KaleS11} estimates the collision probability 
of several logarithmically long random walks using the lazy transition matrix 
from a small set of randomly chosen vertices. It uses these to
give a probabilistically robust test for the gap version of conductance.
We would like to extend this test to the dynamic setting where the graph
evolves slowly by insertion/deletion of small number of edges. Moreover,
in our dynamic complexity setting the metric to measure the algorithm is not
the sequential time but parallel time using polynomially many 
processors. Thus it suffices to maintain the collision probabilities 
in constant parallel time with polynomially many processors to be able to
solve the gap version of conductance in $\DynACz$.

This brings us to our main result:

\begin{theorem}
\textbf{(Dynamic Expansion test)}
\label{thm:dynamic-expansion-test-intro}
Given the promise that the graph remains
bounded degree (degree at most $d$)
after every round of updates,
\textbf{Expansion testing}\footnote{See Section~\ref{sec:expander-test} for 
a formal definition.} can be
maintained in $\DynACz$ under $O(\frac{\log n}{\log \log n})$
changes.
\end{theorem}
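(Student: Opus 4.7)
The plan is to port the Kale--Seshadri gap expansion test to the dynamic setting by reducing the query to maintaining the first $t = O(\log n)$ powers of the lazy transition matrix $M$. First I would fix, as part of the initial auxiliary data, a sample $S \subseteq V$ of $\poly(\log n)$ vertices. By \cite{KaleS11} the expansion decision reduces to thresholding the collision probabilities $\|M^t e_v\|_2^2 = \sum_u (M^t[u,v])^2$ across $v \in S$. The bounded-degree promise implies that $M^t e_v$ has at most $d^t = n^{O(1)}$ nonzero entries, each a rational of polynomial bit length, so once the entries of $M^t$ are stored as auxiliary data the thresholding itself is an $\ACz$ computation.

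The task therefore reduces to maintaining $M^i$ for $0 \leq i \leq t$ under a batch of $k = O(\log n/\log \log n)$ edge modifications. Writing the post-update matrix as $M' = M + UV^\top$ with $U,V \in \R^{n \times r}$ for $r = O(k)$, I would use the expansion
\[
(M')^i = \sum_{\ell=0}^{i} \sum_{\substack{j_0,\ldots,j_\ell \geq 0 \\ j_0 + \cdots + j_\ell = i - \ell}} M^{j_0} U \cdot W_{j_1} W_{j_2} \cdots W_{j_{\ell-1}} \cdot V^\top M^{j_\ell},
\]
where $W_j := V^\top M^j U \in \R^{r \times r}$ is a ``bridge'' matrix readable off the stored powers. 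The number of terms over all $i \leq t$ is at most $2^t = n^{O(1)}$, hence enumerable in $\ACz$, and the endpoint contributions $M^{j_0} U$ and $V^\top M^{j_\ell}$ come directly from the maintained auxiliary data.

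The principal obstacle is the interior bridge product of up to $\Theta(\log n)$ matrices of dimension $r \times r$, since iterated matrix multiplication is $\NCo$-hard in general. To bypass it, I would reuse the polynomial-interpolation strategy from \cite{DKMSZ18,DMVZ18}: encode the sequence $(W_j)$ into the generating matrix $G(z) = \sum_{j \geq 0} z^j W_j$ and apply the Sherman--Morrison--Woodbury identity
\[
(I - zM')^{-1} = (I - zM)^{-1} + z(I - zM)^{-1} U \bigl(I_r - zG(z)\bigr)^{-1} V^\top (I - zM)^{-1},
\]
so that the only nontrivial inverse is on an $r \times r$ polynomial matrix of degree $\leq t$. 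Truncating mod $z^{t+1}$, evaluating at $\poly(\log n)$ sample points $z_1, \ldots, z_N$ (each involving only a small-dimensional rational computation) and reconstructing coefficients by Lagrange interpolation recovers the entries of each $(M')^i$ in parallel. All intermediate computations involve rationals of polynomial bit length indexed by a polynomial-size domain, so they fit within $\ACz$, completing the dynamic maintenance of $M^t$ and hence the expansion test.
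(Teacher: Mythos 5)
Your high-level architecture (reduce the test to maintaining $O(\log n)$ powers of the transition matrix, turn the batch update into a low-rank/small-support perturbation, and push the hard part into a generating function $(I-zM)^{-1}$ handled by small-dimension linear algebra plus interpolation) runs parallel to the paper, which uses a generalized Hesse walk-counting construction plus Cayley--Hamilton instead of Sherman--Morrison--Woodbury. However, the proposal has a genuine gap at its final and most critical step. You assert that because all intermediate rationals have polynomial bit length, the computation "fits within $\ACz$." This is exactly where the difficulty lies: inverting (or taking determinants of) even an $r\times r$ matrix with $r = O(\log n/\log\log n)$ is only known to be in $\ACz$ when the entries have \emph{polylogarithmic} bit length (via CRT with polylog-size primes and the fact that logspace computations on polylog-length inputs admit polynomial-size constant-depth circuits). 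With $n^{O(1)}$-bit rational entries -- which is what exact maintenance of $M^j$ for $j$ up to $\log n$ produces, since the denominators are $(2d)^j$ -- the CRT reconstruction and the iterated products in the cofactor expansion land you in $\TCz$, not $\ACz$. The paper's way out is to truncate all entries to $\log^2 n$ bits at every step; but truncation introduces error that accumulates, so correctness only holds for $O(\log n)$ batches, and one must then pipeline ("muddle"): spawn a fresh static $\ACo$ from-scratch computation at every time step and have the dynamic program process two batches per step to catch up. None of this error analysis or pipelining appears in your proposal, and without it the claimed $\ACz$ bound does not go through.

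Two smaller issues. First, fixing a sample $S$ of $\poly(\log n)$ vertices as initial auxiliary data does not survive adaptive updates: the Kale--Seshadri guarantee only promises a witness vertex inside the (unknown, evolving) low-conductance cut, so a sample fixed before the adversary acts can miss it; since you maintain the full matrix anyway, you should threshold $\tilde{T}[v,v]$ for \emph{every} vertex, as the paper does. Second, $(I_r - zG(z))^{-1}$ is a rational function of $z$, so evaluating it at sample points and applying Lagrange interpolation does not directly yield the coefficients of its truncation mod $z^{t+1}$; you need to interpolate the numerator and denominator determinants separately (Cramer's rule) and then solve the resulting lower-triangular convolution system to extract the power-series coefficients. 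Both are fixable, but the missing truncation-error-plus-muddling machinery is essential and is the heart of the paper's proof.
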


In other words, we need to maintain the generating function of at most logarithmic length 
walks of a transition matrix when the matrix is changed by almost logarithmiclly
many
edges\footnote{$O(\frac{\log{n}}{\log{\log{n}}})$ to be precise -- for us the 
term ``almost logarithmic'' is a shorthand for this.} in one step.
The algorithm is based on a series of reductions, from the above problem
ultimately to two problems -- integer determinant of an almost logarithmic
matrix modulo a small prime and interpolation of a rational polynomial
of polylogarithmic degree.  Each reduction is in the class 
$\ACz$. Moreover if there are errors in the original data the errors do not
increase after a step. On the other hands the entries themselves lengthen
in terms of number of bits. To keep them in control we have to truncate the
entries at every step increasing the error. We can continue to use these values
for a number of steps before the error grows too large. Then we use a matrix
that contains the required generating function computed from scratch.
Unfortunately this ``from scratch'' computation takes logarithmically many
steps. But by this time $O(\frac{\log^2{n}}{\log{\log{n}}})$ changes have
accumulated. Since we deal with almost logarithmic many changes in logarithmically
many steps by working at twice the speed in other words the $\ACz$ circuits
constructed will clear off two batches in one step and thus are of twice
the height. Using this, we catch up with the current change in logarithmically
many steps. 
Hence, we spawn a new circuit at every time step which will be useful 
logarithmically many steps later. 

The crucial reductions are as follows:
\begin{itemize}
\item (Lemma~\ref{lem:genHesse}) 
Dynamically maintaining the aforesaid generating function reduces to 
powering an almost logarithmic matrix of univariate polynomials to logarithmic
powers by adapting (the proof of) a method by Hesse \cite{Hes}.
\item (Lemma~\ref{lem:CayleyHamilton})
Logarithmically powering an almost logarithmically sized matrix reduces
to powering a collection of similar sized matrices but to only an almost
logarithmic power using the Cayley-Hamilton theorem along. This further requires
the computation of the characteristic polynomial via an almost logarithmic sized
determinant and interpolation.
\item  (Lemma~\ref{lem:powToDet})
To compute $M^i$ for $i$ smaller than the size of $M$, we consider  the
power series $(I - zM)^{-1}$ and show that we can use interpolation and 
small determinants (of triangular matrices) to read off the small powers of 
$M$ from it. 
\item (Lemma~\ref{lem:detInterp})
 We reduce rational determinant to integer determinant modulo $p$.
We invoke a known result from \cite{DMVZ18} to place this in $\ACz$.
\end{itemize}
Since interpolation of polylogarithmic degree polynomials is in $\ACz$, this
rounds off the reductions and the outline of the proof of:
\begin{theorem}
\textbf{(Main technical result: informal)}
Let $T$ be an $n\times n$ dynamic transition matrix, in which, there are at most
$O(\frac{\log{n}}{\log{\log{n}}})$ changes in a step.
Then we can maintain in $\DynACz$, a matrix $\tilde{T}$ such that
$|\tilde{T} - T^{\log{n}}| < \frac{1}{n^{\omega(1)}}$.
\label{thm:matrix-pow-approx-intro}
\end{theorem}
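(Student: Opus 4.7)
The plan is to chain the four reductions listed in the introduction into a single $\DynACz$ procedure, then graft on a pipelining scheme that hides the from-scratch latency so the result is available immediately after each batch update. The target object is $T^{\log n}$, where $T$ is the normalized adjacency (lazy transition) matrix of a bounded-degree graph undergoing batches of at most $k = O(\log n/\log\log n)$ edge modifications per step. I would first apply Lemma~\ref{lem:genHesse} to replace the problem of maintaining powers of $T$ (under batch updates) by the problem of maintaining a generating function stored as an $O(k) \times O(k)$ matrix of univariate polynomials of degree $O(\log n)$, in the style of Hesse's reduction. This is the step that absorbs the \emph{dynamic} character of the input: from here on everything is a sequence of static matrix/polynomial computations on objects whose dimensions are polylogarithmic.

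Next I would invoke Lemma~\ref{lem:CayleyHamilton} to bring the exponent down. Rather than compute $M^{\log n}$ directly for an $O(k) \times O(k)$ matrix $M$, Cayley--Hamilton lets me write $M^{\log n} = \sum_{i=0}^{O(k)-1} c_i M^i$ once the characteristic polynomial of $M$ is known; that polynomial is obtained by evaluating $\det(xI - M)$ on $O(k)$ points followed by polynomial interpolation, and the coefficients $c_i$ come from reducing $x^{\log n}$ modulo it. This swaps the $\log n$-th power problem for (a)~an $O(k)$-sized determinant and (b)~a handful of $M^i$ computations with $i < O(k)$. Step~(b) is then handled by Lemma~\ref{lem:powToDet}: the entries of $M^i$ appear as coefficients of the formal power series $(I - zM)^{-1}$, which by Cramer's rule are rational functions whose numerators and denominators are (essentially triangular) small determinants; another interpolation over $O(k)$ points recovers these coefficients. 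At the bottom of the stack, every determinant computation is over $\Q$ (since $T$ has rational entries), and Lemma~\ref{lem:detInterp} (together with the $\ACz$ integer determinant modulo $p$ result imported from \cite{DMVZ18}) places each of them in $\ACz$. Since polylogarithmic-degree polynomial interpolation is also in $\ACz$, every individual reduction step is an $\ACz$-circuit on objects of polylogarithmic size, so the composition is $\ACz$.

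The real difficulty, and where I would spend most of the effort, is error control together with the latency of the from-scratch computation. Entries of the polynomial matrices in Lemma~\ref{lem:genHesse} grow in bit-length with every update; to keep the $\ACz$ circuits of bounded depth I must truncate the entries after each step, which injects a small additive error. The plan is to show that (i)~the truncation error at a single step stays within an $n^{-\omega(1)}$ budget, and (ii)~preexisting errors propagate by at most a constant factor through one update step, so that the total error remains $n^{-\omega(1)}$ as long as we refresh the state by a from-scratch computation within $O(\log n)$ steps. The obstacle is that this from-scratch computation itself takes $O(\log n)$ update steps to complete, during which $O(\log^2 n/\log\log n)$ new changes arrive. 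I would resolve this by the pipelining trick sketched in the introduction: at every time step spawn a new from-scratch circuit that is tasked with digesting \emph{two} batches per step instead of one (doubling the depth, which is still constant since the underlying static circuit has polylogarithmic depth---wait, we need it in $\ACz$, so the from-scratch computation itself is the $\DynACz$ procedure running at double speed on accumulated batches). After $O(\log n)$ steps this circuit catches up to the current state, at which point it takes over as the authoritative copy before the previous copy's error becomes visible. Since a new circuit is spawned every step, an up-to-date accurate matrix is always available.

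Putting these pieces together yields, at every time step, a matrix $\tilde{T}$ computed by constant-depth polynomial-size circuits from the auxiliary data, satisfying $\|\tilde{T} - T^{\log n}\|_\infty < n^{-\omega(1)}$, as required. The correctness argument splits cleanly into three independent verifications: algebraic correctness of the chain of reductions (Lemmas~\ref{lem:genHesse}--\ref{lem:detInterp}), numerical stability of the truncation schedule, and scheduling correctness of the pipeline that ensures a fresh copy is always ready before the old one decays. The main obstacle I expect is bookkeeping the error budget across the two-speed pipeline simultaneously with the bit-length budget, since a more aggressive truncation keeps the circuit small but feeds back into a faster error blowup.
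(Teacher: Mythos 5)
Your proposal follows essentially the same route as the paper: the identical chain of reductions (Hesse-style generating functions for the dynamic part, Cayley--Hamilton to shrink the exponent, reading small powers off the power series $(I-zM)^{-1}$ via Cramer's rule, and bottoming out in almost-logarithmic determinants plus polylogarithmic-degree interpolation), the same truncate-every-step error budget lasting $\Omega(\log n)$ batches, and the same spawn-a-circuit-per-step muddling scheme. The one point where you waver --- what the from-scratch computation actually is --- the paper resolves by taking it to be a genuinely \emph{static} $\ACo$ circuit for $M^{\log n}$ (repeated squaring in $\TC$ depth $O(\log\log n)$, depth-reduced to $\ACo$), whose $O(\log n)$ depth is amortized at one constant-depth layer per time step while the dynamic procedure runs at double speed to clear the backlog; it cannot be the dynamic procedure itself run on accumulated batches, since that is precisely the component whose accumulated error needs resetting.
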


\subsection{Motivation}

The conductance of a graph, also referred to as
the uniform sparsest cut in many works, is an important metric
of the graph. Many algorithms have been designed
for approximating the uniform sparsest cut  in the static 
setting  \cite{ACL07, Sherman09, ST13, Madry10, KRV09}.
This naturally raises the question of maintaining an approximate value of the conductance in a dynamic graph 
subject to frequent edge changes.

In the Ph.d Thesis of Goranci \cite{Gor19}, a sequential dynamic incremental algorithm (only edge insertions allowed) 
with polylogarithmic approximation and sublinear worst-case
update time. In \cite{GHTZ20}, the authors give a fully dynamic algorithm (both edge insertions and deletions allowed)
with slightly sublinear approximation and polylogarithmic amortized update time.
 On the other hand, our work gives a fully dynamic
algorithm in a parallel setting.

Another difference is that the algorithm in \cite{Gor19, GHTZ20} outputs
an approximate value of the conductance while our algorithm only
solves the gap version.

There has also been significant related work on investigating the
dynamic complexity of problems like rank, reachability and matching 
under single edge changes \cite{Hes,DattaHK14,DKMSZ18} and under
batch changes \cite{DMVZ18,DKMTVZ20}.

\section{Preliminaries}
We start by putting down a convention we have already been using. We refer
by \emph{almost logarithmic} (in $n$) a function that grows like 
$O(\frac{\log{n}}{\log{\log{n}}})$.
\subsection{Dynamic Complexity}
The primary circuit complexity class we will deal with is $\ACz$, consisting
of languages recognisable by a Boolean circuit family with $\wedge,\vee$-gates 
of unbounded fan-in along with $\neg$-gates of fan-in one where the size of the
circuit is a polynomial in the length of the input and crucially the depth
of the circuit is a constant (independent of the input). Since we are more
interested in providing $\ACz$-upper bounds our circuits will be 
Dlogtime-uniform. There is a close
connection between uniform $\ACz$ and the formal logic class
$\fo$ -- to the extent that \cite{BIS} show that the version $\fo(\leq,+,\times)$ 
 is essentially identical to $\ACz$. We will 
henceforth not distinguish between the two.

The goal of a dynamic program is to answer a given query on an input
graph under changes that insert or delete edges. 
We assume that the number of vertices in the graph are fixed
and initially the number of edges in the graph is zero. Of course, we assume
an encoding for the graph as a string and any natural encoding works.

The complexity of the dynamic program is measured by a complexity class 
$\mathcal{C}$ (such as $\ACz$) and those queries which can be answered 
constitute the class $\mathsf{Dyn}\mathcal{C}$. In other words the (circuit)
class $\mathcal{C}$ can handle each update given some polynomially many stored
bits.

Traditionally the changes
under which this can be done was fixed to one but recently \cite{DMVZ18,DKMTVZ20}
this has been extended to batch changes. In this work we will allow 
nonconstantly many batch changes (of cardinality 
$O(\frac{\log{n}}{\log{\log{n}}})$).

One technique which has proved important for dealing with batch changes is
a form of pipelining suited for circuit/logic classes called 
``muddling'' \cite{DMSVZ19,SVZ}. Suppose we have a static parallel
circuit $\mathcal{A}$ of non-constant depth that can process the input to
a form from where the query is answerable easily and in 
addition we
have a dynamic program $\mathcal{P}$ consisting of constant depth circuits 
that can maintain
the query but the correctness of the results is guaranteed for only a small
number of batches. 
This situation may arise if e.g. the dynamic program uses an approximation
in computing the result and the ensuing errors add up across several steps
making the results useless after a while. On the other hand the static circuit
does precise computation always but takes too much depth.

We need to construct a circuit that is of depth
constant per batch of changes but is promised to work for arbitrarily 
many batches.
The idea is to use a copy of the circuit $\mathcal{A}$ to process the current
input to a form where queries can be answered easily. However, by the time this
happens the input is stale in that $d(\mathcal{A})$ (the depth of the circuit
class) times the batch size many changes are not included. Now we use the 
dynamic program $\mathcal{P}$ to handle a leftover batch and the 
currently arriving batch in one unit of time over the next $d(\mathcal{A})$
time steps. This will allow the program to catch up with the backlog and
allow it to deliver the $2d(\mathcal{A})$ result in the $2d(\mathcal{A})$-th
time step. Since the total depth of the circuit involved in this
is $O(d(\mathcal{A}))$, the average depth remains constant. By starting a
new static circuit at every time step,
that will deliver the then correct result after $2d(\mathcal{A})$-steps,
we are done. To summarise in our particular case (an adapted and modified 
version of the ``muddling'' lemmas from \cite{DMSVZ19,DMVZ18})
we have the following:

\begin{lemma}\label{lem:muddle}
Let $M$ be a matrix with $b = O(\log^2{n})$-bit rational entries. Suppose we have two 
routines available:
\begin{itemize}
\item An algorithm $\mathcal{A}$ that can compute $M^{\log{n}}$ by an
$\ACo$ circuit.
\item A dynamic program $\mathcal{P}$ specified by an $\ACz$-circuit
that can approximately 
maintain $M^{\log{n}}$ under batch changes of size $l
= O(\frac{\log{n}}{\log{\log{n}}})$ for $\Omega(\log{n})$
batches.
\end{itemize}
Then, we have an $\ACz$-circuit that will approximately maintain $M^{\log{n}}$
under batch changes of size $l$ for arbitrarily many batches.
\end{lemma}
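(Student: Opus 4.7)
The plan is to pipeline concurrent ``strands,'' spawning a new one at every time step so that, at any time $t$, some strand started $2d(\mathcal{A})$ steps earlier is ready to deliver the answer. Each strand runs for $2d(\mathcal{A}) = O(\log n)$ time steps, divided into two phases of length $d(\mathcal{A})$. In Phase~1 of strand $S_t$, I will evaluate the static $\ACo$ circuit $\mathcal{A}$ on the snapshot $M_t$ by processing one layer per time step; since a single layer of $\mathcal{A}$ is an $\ACz$ circuit, each such step contributes only constant depth to the dynamic program. At the end of Phase~1 the strand holds $M_t^{\log n}$, which is now stale by $d(\mathcal{A})$ batches.

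In Phase~2 the strand uses $\mathcal{P}$ to close the gap. I will compose two copies of $\mathcal{P}$ into one constant-depth circuit and apply this double-speed update once per time step, feeding it one backlogged batch and one freshly arrived batch. Over the $d(\mathcal{A})$ steps of Phase~2 the strand absorbs both the $d(\mathcal{A})$-batch backlog and the $d(\mathcal{A})$ new batches that arrive during Phase~2, so at time $t + 2d(\mathcal{A})$ it outputs the desired approximation of $M^{\log n}_{t + 2d(\mathcal{A})}$. The total number of batches processed by $\mathcal{P}$ inside any single strand is $2d(\mathcal{A}) = O(\log n)$, comfortably within the $\Omega(\log n)$-batch correctness window assumed for $\mathcal{P}$, so the approximation guarantee is preserved end-to-end.

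Because a new strand starts at every time step, strand $S_{t - 2d(\mathcal{A})}$ delivers the current answer at every time $t \geq 2d(\mathcal{A})$; the initial warm-up of $2d(\mathcal{A})$ steps is handled directly with a single from-scratch invocation, or is vacuous since the graph starts empty. At any time $O(\log n)$ strands are simultaneously active, but they advance in parallel and each strand's contribution in a given step is either a single $\ACz$-depth layer of $\mathcal{A}$ or one double-speed $\mathcal{P}$-step. Consequently the overall per-step update circuit has depth $O(1)$, placing the combined maintenance scheme in $\ACz$.

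The step I expect to be the main obstacle is the bookkeeping that glues Phase~1 to Phase~2: each active strand must track which backlogged batch it is currently replaying, which in turn requires maintaining pointers into a circular log of the last $O(\log n)$ batches of edge changes. Since there are only $O(\log n)$ active strands, pointer values are $O(\log n)$-bit integers, and each batch has size $O(\log n / \log\log n)$, the entire log and its pointers fit in polynomially many bits of dynamic state and can be advanced in $\ACz$ using the built-in $(\leq, +, \times)$ relations. This bookkeeping is routine but is the delicate ingredient that turns the informal ``muddling'' intuition into a precise constant-depth update circuit.
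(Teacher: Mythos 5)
Your proposal is correct and follows essentially the same route as the paper: the paper's proof is exactly this ``muddling'' pipeline, spawning a circuit $C_t$ of depth $(c_{\mathcal{A}}+2c_{\mathcal{P}})\log n$ at every time step (a layer-by-layer evaluation of $\mathcal{A}$ followed by double-speed applications of $\mathcal{P}$ to clear the backlog), with $O(\log n)$ circuits concurrently active and only constantly many layers of each advanced per step. The only cosmetic difference is that you spell out the batch-log bookkeeping, which the paper leaves implicit.
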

\begin{proof}
Suppose the circuit for $\mathcal{A}$ has depth $c_{\mathcal{A}}\log{n}$ and 
the circuit for $\mathcal{P}$ has depth $c_{\mathcal{P}}$. Then we will
show how to construct a circuit $C_t$ at time $t$ of depth 
$d = (c_{\mathcal{A}} + 2c_{\mathcal{P}})\log{n} =c\log{n}$
that will compute the value $M^{\log{n}}$ which is correct at the time 
$t+\log{n}$.
At a time there are $\log{n}$ circuits extant viz. $C_{t-\log{n}+1},C_{t-\log{n}+2},\ldots,C_t$
which will deliver the correct value of $M^{\log{n}}$ at times 
$t+1,t+2,\ldots, t+\log{n}$ respectively. Since the size of each circuit $C_i$ is
polynomial in $n$ so is the size $s_c(n)$ of $c$ layers of $C_i$. 
Thus, we can think that
each layer of the overall circuit consists of $c$ layers of each of 
$C_{t-\log{n}+1},\ldots,C_t$ of total size $s_c(n)\log{n}$ per layer i.e. it is
an $\ACz$ circuit.
\end{proof}

Next we describe the algorithm $\mathcal{A}$ that works in $\ACo$. 
Notice that two $n \times n$ matrices with entries that are rationals with
at most polynomial in $n$ bits each can be multiplied in $\TCz$ 
(see e.g. \cite{HAB,Vollmer}). Hence raising a matrix $A$ to the $\log{n}$-th
power can be done by $\TC$-circuits of depth $O(\log{\log{n}})$ (by repeated
squaring). We also
know that $\TCz$ is a subset of $\NCo$ \cite{Vollmer}
which in turn has $\AC$-circuits of
depth $O(\frac{\log{n}}{\log{\log{n}}})$ (just cut up the circuit into
$\NC$-circuits of depth $\log{\log{n}}$ and expand each subcircuit into a 
DNF-formula of size $2^{2^{\log{\log{N}}}} = n$ -- thus overall we get a depth
reduction by a factor of $\log{\log{n}}$ at the expense of a linear blowup 
in size). Now by substituting these $\AC$-circuit in the $\TC$-circuits of
depth $\log{\log{n}}$ we get an $\ACo$ circuit. Thus we get:
\begin{lemma}\label{lem:staticLogPower}
Let $A$ be an $n \times n$ matrix with rational entries. The entries are 
represented with $n$ bits of precision each. Then computing $A^\ell$,
where $\ell = {O(\log{n})}$, is in $\ACo$.
\end{lemma}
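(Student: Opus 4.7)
The plan is to chain three ingredients: a $\TCz$ bound for rational matrix multiplication, repeated squaring, and the standard flattening of a bounded-depth $\NCo$ circuit into a shallower $\AC$ circuit with larger fan-in. First I would recall that multiplying two $n \times n$ matrices whose rational entries have $\poly(n)$ bits each reduces to iterated sum and product of polynomially many polynomially-long integers, both of which are classical $\TCz$ operations (cf.\ \cite{HAB,Vollmer}). Iterating this via repeated squaring expresses $A^{\ell}$, for $\ell = O(\log n)$, as a stack of $O(\log \log n)$ such $\TCz$ layers. Because we perform only $O(\log \log n)$ squarings starting from $n$-bit rationals, the bit length of the intermediate numerators and denominators stays polynomial in $n$ and the $\TCz$ multiplication routine remains applicable at every level.

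Second, I would flatten each constant-depth $\TCz$ layer into a shallow $\AC$ circuit. For this I would use the inclusion $\TCz \subseteq \NCo$ of \cite{Vollmer} and then apply the classical ``brute-force DNF'' trick: given an $\NCo$ circuit of depth $c \log n$ with fan-in two, chop it into $O(\log n / \log \log n)$ contiguous blocks each of depth $\log \log n$. Any gate at the top of such a block depends on at most $2^{\log \log n} = \log n$ inputs from below, so its function can be expressed as an unbounded fan-in DNF (or CNF) of size at most $2^{\log n} = n$. Replacing each block in this way gives an $\AC$ circuit of depth $O(\log n / \log \log n)$ and polynomial size that computes the same function as the original $\NCo$ circuit, hence realizes the $\TCz$ layer.

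Finally, I would substitute these depth-$O(\log n / \log \log n)$ $\AC$ subcircuits in place of each of the $O(\log \log n)$ $\TCz$ layers in the repeated-squaring stack. The total depth is
\[
O\!\left(\tfrac{\log n}{\log \log n}\right) \cdot O(\log \log n) \;=\; O(\log n),
\]
and the size remains polynomial, so the resulting circuit lies in $\ACo$. The proof is essentially bookkeeping; the only thing that really needs care is controlling the bit length of the intermediate rational entries across the $O(\log \log n)$ squarings so that the $\TCz$ matrix-multiplication bound continues to apply at every layer, and as noted this stays within $\poly(n)$ throughout.
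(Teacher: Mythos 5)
Your proposal is correct and follows essentially the same route as the paper: $\TCz$ rational matrix multiplication, $O(\log\log n)$ levels of repeated squaring, $\TCz \subseteq \NCo$, and the block-wise DNF flattening of $\NCo$ into $\AC$-depth $O(\log n/\log\log n)$, multiplied out to total depth $O(\log n)$. Your explicit remark about controlling the bit lengths of intermediate entries across the squarings is a detail the paper leaves implicit, but the argument is the same.
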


\subsection{Logarithmic space computations}
In this section, we present some basic results about
logarithmic space computations which will be useful 
to us. First, we show that reachability in graphs
can be decided by bounded depth boolean circuits
of subexponential size.

\begin{lemma}
\label{lem:undir-conn-subexp}
Given an input graph $G$ with $|V(G)| = n$ and two fixed vertices
$s$ and $t$, there is a  circuit of depth $2d$ and size
$n^{n^{1/d}}$ which can decide if there is a path from $s$ to $t$
in $G$.
\end{lemma}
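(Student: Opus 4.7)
The plan is an iterated path-splitting construction. Since $G$ has $n$ vertices, $s$ and $t$ are connected iff there is an $s$-$t$ walk of length at most $n$ in the graph obtained from $G$ by adding self-loops at every vertex (this allows shorter walks to be ``padded''). Set $k = \lceil n^{1/d} \rceil$ so that $k^d \geq n$, and let $R_\ell(x,y)$ denote the predicate ``there is an $x$-$y$ walk of length at most $\ell$ in the self-loop-augmented graph.'' The key identity is
\[
R_{k\ell}(x,y) \;=\; \bigvee_{(v_1,\ldots,v_{k-1}) \in V(G)^{k-1}} \;\; \bigwedge_{i=0}^{k-1} R_{\ell}(v_i, v_{i+1}),
\]
with $v_0 = x$ and $v_k = y$. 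This realises $R_{k\ell}$ from $R_\ell$ by a single depth-$2$ OR-of-AND subcircuit of size $O(n^2 \cdot n^{k-1} \cdot k) = n^{O(k)}$, where the outer factor of $n^2$ accounts for one such OR-of-AND per entry of the output matrix.

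Next I would stack $d$ copies of this subcircuit on top of each other. The inputs at the bottom are the entries of $R_1$, which is just the adjacency relation together with the diagonal, and is read directly from the input bits. After $d$ stages, the circuit outputs the matrix $R_{k^d}$, and since $k^d \geq n$ the entry at position $(s,t)$ is exactly the desired reachability bit. The depth is $2d$ by construction, and the total size is $d \cdot n^{O(k)} = n^{O(n^{1/d})}$, which matches the claimed bound $n^{n^{1/d}}$ after absorbing the multiplicative constant and the factor of $d$ into the exponent (formally by taking $k = \lceil c \cdot n^{1/d}\rceil$ for a small constant $c$, or by a mild restatement of the bound).

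The proof is essentially routine; there is no genuine obstacle. The only point requiring attention is the size accounting, where one must confirm that lower-order terms (the $d$ factor, the $n^2$ factor, and polynomial overhead in addressing intermediate vertices) can be absorbed into the exponent $n^{1/d}$. The argument is agnostic to whether $G$ is directed or undirected, since in both cases reachability reduces to existence of a bounded-length walk in the self-loop-augmented graph.
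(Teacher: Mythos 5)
Your construction is correct and is exactly the standard divide-and-conquer on walk length that the paper itself does not reproduce but defers to by citation (\cite{COST16}, pg.~613): split a length-$k^d$ walk into $k$ segments per level, realised as an OR-of-ANDs of fan-in $n^{k-1}$ and $k$ respectively, stacked $d$ times. The only caveat is the one you already flag --- the honest size bound is $d\cdot n^{O(n^{1/d})}$ rather than literally $n^{n^{1/d}}$ --- and this slack is harmless for the paper's only use of the lemma (Lemma~\ref{lem:borrowed}), where $d$ is simply chosen large enough to make the exponent sublogarithmic.
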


See \cite{COST16}, pg. 613 for a proof.

In the following, we denote by $A^{\leq l}$ the words in the language $A$ 
that are of length at most $l$.
\begin{lemma}\label{lem:borrowed}
Suppose $A \in \Log$ is a language. Then for constant $c > 0$,
$A^{\leq \log^c{n}}$ has an $\ACz$ circuit of depth $O(1)$ and size $n^{O(1)}$.
\end{lemma}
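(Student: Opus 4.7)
The plan is to simulate the logspace machine for $A$ by its configuration graph, which on short inputs has only polylogarithmically many vertices, and then apply Lemma~\ref{lem:undir-conn-subexp} with a sufficiently large depth parameter to turn the reachability question into a constant-depth polynomial-size circuit.

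More concretely, fix a Turing machine $M$ deciding $A$ in space $s(m) = a\log m + O(1)$ for inputs of length $m$. For any input $x$ of length $m$, the configuration graph $G_x$ of $M$ on $x$ has $N(m) \leq m^{a + O(1)}$ vertices, and $x \in A$ iff the initial configuration reaches an accepting configuration in $G_x$. Whether a pair $(u,v)$ of configurations forms an edge of $G_x$ depends only on $O(1)$ bits of $x$ (the tape cell currently being scanned) together with the constant-sized description of $u$ and $v$, so each edge indicator can be computed by a constant-size subcircuit from $x$.

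Restricting to $m \leq \log^c n$, the configuration graph has $N := N(m) \leq (\log n)^{k}$ vertices for $k := c(a + O(1))$, a constant depending only on $A$ and $c$. Apply Lemma~\ref{lem:undir-conn-subexp} with depth parameter $d := k+1$; this gives a constant-depth ($2d = O(1)$) circuit of size at most
\[
N^{N^{1/d}} \;=\; (\log n)^{\,k\,(\log n)^{k/(k+1)}} \;=\; 2^{\,k \log\log n \,\cdot\, (\log n)^{k/(k+1)}}.
\]
Since $k/(k+1) < 1$, the exponent $k\log\log n \cdot (\log n)^{k/(k+1)}$ is $o(\log n)$ (the polynomial saving $(\log n)^{1/(k+1)}$ swamps the $\log\log n$ factor). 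Hence the circuit has size $2^{o(\log n)} = n^{o(1)} \subseteq n^{O(1)}$.

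The final $\ACz$ circuit simply constructs the adjacency matrix of $G_x$ from $x$ by the constant-size edge-indicator subcircuits in the first layers, then feeds it into the circuit supplied by Lemma~\ref{lem:undir-conn-subexp}; both stages are $O(1)$ depth and polynomial size. The only point requiring any care is the arithmetic above that verifies $(\log n)^{k/(k+1)} \log\log n = o(\log n)$, justifying the choice of $d$; everything else is bookkeeping. (Dlogtime-uniformity is straightforward because the configuration graph, being generated from a fixed machine $M$, has a trivial local description, and the circuit of Lemma~\ref{lem:undir-conn-subexp} is itself uniform.)
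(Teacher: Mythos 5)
Your proposal is correct and follows essentially the same route as the paper: reduce the short-input membership question to $s$-$t$ reachability in a graph of polylogarithmic size $N$, then invoke Lemma~\ref{lem:undir-conn-subexp} with the depth parameter $d$ chosen as a constant large enough that $N^{N^{1/d}} = 2^{o(\log n)}$ is sublinear. The only divergence is that you build the configuration graph of the logspace machine directly, whereas the paper invokes the Cook--McKenzie first-order reduction of $A$ to \emph{undirected} reachability; since the configuration graph is directed, you should either observe that the Savitch-style divide-and-conquer behind Lemma~\ref{lem:undir-conn-subexp} works verbatim for directed graphs, or convert the (out-degree-one) configuration graph to an undirected instance by time-stamping configurations --- a one-line fix either way.
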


 \begin{proof}
 Undirected reachability is $\Log$-hard under first order reductions
 by Cook and McKenzie \cite{CM87}. Hence $A$ reduces to undirected reachability by first order
 reductions. Thus given $N$, we can construct an undirected graph $G_N$ of size
 some $N^k$,
 and two vertices $s,t$ thereof using first order formulas such that
 for all $w$ of length at most $N$, $w \in A$ iff $s,t$ are connected in $G_N$.
 But Lemma~\ref{lem:undir-conn-subexp} tells us that there exists a 
 (very-uniform) $\AC$-circuit of size $N^{kN^{k/d}}$ and depth $2d$
 that determines connectivity in $G_N$. Taking $N = \log^c{n}$, the size of
 the circuit becomes $2^{kc\log{\log{n}}\log^{kc/d}{n}}$. Now pick
 $d = kc + 1$ then the size becomes sublinear in $n$ (because the exponent
 is sublogarithmic).
 \end{proof}

\section{Maintaining expansion in bounded degree graphs}
\label{sec:expander-test}
In this section, we are interested in the problem of
maintaining expansion in a dynamically updating bounded
degree graph. For a degree-bounded graph $G$, let $\lambda_G$ denote the second largest eigenvalue of the normalized adjacency matrix of $G$. First, we define the problem of interest, which we call \textit{Expansion Testing}:

\begin{definition}
(\textbf{Expansion testing})
Given a graph $G$, degree bound $d$, and a parameter $\alpha$, decide whether $\lambda_G \leq \alpha$ or $\lambda_G \geq \alpha'$ where $\alpha' = 1 - (1-\alpha)^2/5000$.
\label{def:exp-test}

\end{definition}

In this section, we aim to prove the following theorem:

\begin{theorem}
\textbf{(Dynamic Expansion test)}
\label{thm:dynamic-expansion-test}
Given the promise that the graph remains
bounded degree (degree at most $d$)
after every round of updates,
\textbf{Expansion testing} can be
maintained in $\DynACz$ under $O(\frac{\log n}{\log \log n})$
changes.

\end{theorem}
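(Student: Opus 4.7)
My plan is to reduce \textbf{Expansion testing} to the task of dynamically maintaining an approximation of $T^{\log n}$, where $T$ is the normalized lazy transition matrix of the evolving bounded-degree graph, and then to implement the Kale--Seshadri collision-probability test on top of it. Given the promise that the degree stays bounded by $d$, both $T$ and its updates touch only $O(d)$-many entries per inserted/deleted edge. Once the rows of $T^{\log n}$ corresponding to a \emph{fixed} pre-sampled set $S$ of vertices are in hand to accuracy $1/n^{\omega(1)}$, the test itself is just a comparison of finitely many sums of products of these entries against the thresholds associated with $\alpha$ and $\alpha'$, and this final decision step is trivially in $\ACz$. Thus the whole weight of the proof lies in maintaining a sufficiently accurate $T^{\log n}$ in $\DynACz$ under almost logarithmic batch changes, which is exactly the informal content of Theorem~\ref{thm:matrix-pow-approx-intro}.

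To maintain $T^{\log n}$, I would invoke the muddling framework (Lemma~\ref{lem:muddle}). The static ingredient $\mathcal{A}$ is already provided by Lemma~\ref{lem:staticLogPower}: from scratch, $T^{\log n}$ can be recomputed in $\ACo$. What remains is to build the dynamic short-horizon program $\mathcal{P}$: a constant-depth circuit that approximately updates $T^{\log n}$ under each incoming batch and that remains accurate for $\Omega(\log n)$ consecutive batches. For this I would chain the reductions advertised in the introduction: Lemma~\ref{lem:genHesse} adapts Hesse's technique to reduce maintaining the walk-generating function to raising an almost-logarithmic-sized matrix of univariate polynomials to a logarithmic power; Lemma~\ref{lem:CayleyHamilton} then collapses this logarithmic power, via Cayley--Hamilton plus interpolation, to the characteristic polynomial together with powers of degree below the matrix size; Lemma~\ref{lem:powToDet} reads those small powers off the formal series $(I - zM)^{-1}$ using interpolation and triangular determinants; and Lemma~\ref{lem:detInterp} converts the required rational determinants into integer determinants modulo small primes, which sit in $\ACz$ by the result of \cite{DMVZ18} already cited. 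Because polylogarithmic-degree univariate interpolation is in $\ACz$ (and polylogarithmic-length space-bounded predicates are too, by Lemma~\ref{lem:borrowed}), every link in the chain is an $\ACz$ reduction, so $\mathcal{P}$ is an $\ACz$ circuit per batch.

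The main obstacle I anticipate is controlling the bit-length and round-off of the approximations produced by $\mathcal{P}$. Each dynamic step multiplies entries by a constant-sized correction but grows bit-length by an almost logarithmic additive term, so without truncation the operands quickly exceed what an $\ACz$ circuit can manipulate on polynomially many bits. Truncating to the $O(\log^2 n)$-bit precision that Lemma~\ref{lem:muddle} assumes introduces a bounded error at every batch; I must then show that over $\Omega(\log n)$ consecutive batches this error, amplified by matrix multiplication at norm $\le 1$ (using that $T$ is a transition matrix and $S$-row queries use $\ell_1$-bounded operations), stays below $n^{-\omega(1)}$. This is the quantitative heart of the argument and is exactly what forces the muddling interval to be $\Theta(\log n)$ batches. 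Once this error bookkeeping closes, Lemma~\ref{lem:muddle} composes $\mathcal{A}$ and $\mathcal{P}$ into a single $\ACz$ circuit that maintains $T^{\log n}$ to the required accuracy indefinitely, and layering the Kale--Seshadri decision on top yields Theorem~\ref{thm:dynamic-expansion-test}.
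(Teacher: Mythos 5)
Your proposal follows essentially the same route as the paper: reduce \textbf{Expansion testing} to maintaining an approximation of $T^{\log n}$ (Theorem~\ref{thm:matrix-pow-approx}), obtain that maintenance by muddling (Lemma~\ref{lem:muddle}) a static $\ACo$ recomputation (Lemma~\ref{lem:staticLogPower}) against the dynamic chain of reductions through Lemmas~\ref{lem:genHesse}, \ref{lem:CayleyHamilton}, \ref{lem:powToDet} and~\ref{lem:detInterp}, with truncation controlling the accumulated error over $\Theta(\log n)$ batches, and then run the Kale--Seshadri collision test on the maintained matrix.

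The one place you diverge is the ``fixed pre-sampled set $S$'' of start vertices. The paper dispenses with sampling entirely and tests $\tilde{T}[v,v]$ against the threshold $\frac{1}{n}\left(1+\frac{2}{n}\right)$ for \emph{every} vertex $v$, which costs nothing given polynomially many gates. This matters for soundness: the guarantee available (Lemma~\ref{lem:ks-low-cond}) only asserts the \emph{existence} of one vertex with large deviation from uniform, so a sampled $S$ could miss the witness unless you additionally import the stronger ``many witnesses'' counting argument from Kale--Seshadri's property-testing analysis. Checking all $n$ diagonal entries sidesteps this issue and also removes any need for randomness in the dynamic program; with that adjustment your outline matches the paper's proof.
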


Our algorithm is based on Kale and Seshadri's work on testing expansion in the property testing model~\cite{KaleS11}. Our algorithm differs from theirs in that we are working on a dynamic graph, and the major technical challenge is an efficient way to maintain the powers of the normalized adjacency matrix. In this section, we will describe the algorithm and its correctness. In the subsequent sections, we will detail the method to update the power of the normalized adjacency matrix when a small number of entries change.

To prove the theorem, we will first look at the conductance of a graph $G$. For a vertex cut $(S,\overline{S})$ with $|S| \leq n/2$, the conductance of the cut is the probability that one step of the lazy random walk leaves the set $S$. We will denote by $\Phi_G(S)$ the conductance of the cut. Formally, $\Phi_G(S) = \tfrac{|E(S,\overline{S})|}{2d|S|}$. The conductance of the graph $\Phi_G$ is the minimum of $\Phi_G(S)$ over all vertex cuts $(S,\overline{S})$. 
The following inequality between the conductance of a graph and the second largest eigenvalue will be useful in our analysis (see \cite{HooryLW06}).
\begin{align*}
1 - \Phi_G \leq \lambda_G \leq 1 - \frac{\Phi_G^2}{2}.
\end{align*}

 For a $d$-degree-bounded graph $G$, we will think of $G$ as a $2d$-regular graph where each vertex $v \in V$ has $2d -d(u)$ self-loops.

The main idea behind the algorithm in \cite{KaleS11} is to
perform many lazy random walks of length $k = O(\log n)$ from a
fixed vertex $s$ and count the number of pairwise
collisions between the endpoints of these walks.
A \emph{lazy} random walk on a graph from a vertex $v$, chooses a neighbor uniformly at random with probability $1/2d$ and chooses to stay at $v$ with probability $1-d(v)/2d$.
We can compute exactly the probability that two 
different random walks
starting at $s$ collide at their endpoints by
computing $S_s = \sum_{u \in [n]} T^\ell[s][u]
\cdot T^\ell[s][u]$,
where $T$ is a transition matrix of the graph.
Since $T$ is symmetric, the matrix $T^k$ must be a
symmetric matrix. Then $S_s$ is equal to 
the $(s,s)$ entry of the matrix $T^{2k}$.
Hence, it suffices to maintain the $(s,s)$
entry of the matrix $T^{2k}$.

To analyze the lazy random walks in our setting, we will look at transition matrices $T$ such that $T[v,v] = 1 - d(v)/2d$ for every $v\in V$ and  $T[u,v] = d(u)/2d$ for every edge $(u,v) \in G$. Notice that it is equivalent to a random-walk on a $2d$-regular graph, where each vertex $u$ with degree $d(u)$ has $2d-d(u)$ self-loops, and therefore we can use the lemma stated above on the graph.

For a vertex $v \in G$, let $\pi^\ell_v$ denote the distribution over $V$ of lazy random walks of length $\ell$ starting from $v$. The distance of this distribution from the stationary distribution (which is uniform in this case), denoted by $D_\ell(v)$ is given by
\begin{align*}
D_\ell(v)^2 = \sum_{u \in V} \left(\pi^\ell_v(u) - \frac{1}{n} \right)^2 = \sum_{u\in V} \pi^\ell_v(u)^2 - \frac{1}{n}. 
\end{align*}

Observe that $\sum_{u\in V}\pi^\ell_v(u)^2 = T^{2\ell}[v,v]$ as shown in the lemma above.
We now state a technical lemma about the existence of vertex $v$ such that $D_\ell(v)$ is high if the graph has low conductance.

\begin{lemma}[\cite{KaleS11}]
	For a graph $G(V,E)$, let $S \subset V$ be a set of size $s \leq n/2$ such that the cut $(S,\overline{S})$ has conductance less than $\delta$. Then, for any integer $l>0$, there exists a vertex $v\in S$ such that 
	\begin{align*}
	D_l(v) > \frac{1}{2\sqrt{s}} (1 - 4\delta)^l.
	\end{align*}
	\label{lem:ks-low-cond}
\end{lemma}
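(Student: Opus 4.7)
The plan is a spectral argument based on the indicator vector $\mathbf{1}_S$ and the positive semi-definiteness of the transition matrix $T$. First I would compute $\langle \mathbf{1}_S, T \mathbf{1}_S \rangle$ directly from the edge structure. Using $T[v,v] = 1 - d(v)/(2d)$ and $T[v,u] = 1/(2d)$ for neighbors, only edges leaving $S$ cost anything in the inner product, giving
\begin{equation*}
\langle \mathbf{1}_S, T \mathbf{1}_S \rangle = s - \frac{|E(S,\overline{S})|}{2d} = s\bigl(1 - \Phi_G(S)\bigr) \geq s(1 - \delta).
\end{equation*}

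Next I would lift this one-step bound to a bound on $\langle \mathbf{1}_S, T^{2l}\mathbf{1}_S\rangle$. The crucial observation is that $T = I - L/(2d)$ where $L$ is the graph Laplacian; combining the degree bound with the standard estimate $\lambda_{\max}(L) \leq 2 d_{\max} \leq 2d$ shows that the eigenvalues of $T$ lie in $[0,1]$, so $T$ is positive semi-definite (this is precisely where the self-loops-to-$2d$-regular lazification is needed). Writing $\mathbf{1}_S = \sum_i c_i \phi_i$ in an orthonormal eigenbasis of $T$ with eigenvalues $\mu_i \in [0,1]$, the weights $c_i^2/s$ form a probability distribution on $\{\mu_i\}$ because $\sum_i c_i^2 = \|\mathbf{1}_S\|_2^2 = s$, and Jensen's inequality applied to the convex map $x \mapsto x^{2l}$ yields
\begin{equation*}
\frac{\langle \mathbf{1}_S, T^{2l} \mathbf{1}_S \rangle}{s} = \sum_i \frac{c_i^2}{s}\, \mu_i^{2l} \geq \Bigl(\sum_i \frac{c_i^2}{s}\, \mu_i\Bigr)^{2l} \geq (1-\delta)^{2l}.
\end{equation*}

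To extract a single good vertex from this aggregate bound, I would apply Cauchy--Schwarz to the uniform-on-$S$ distribution $\mu_S = \mathbf{1}_S / s$: the inequality $(\sum_{v \in S} \pi^l_v(u))^2 \leq s \sum_{v\in S} \pi^l_v(u)^2$, summed over $u$ and combined with symmetry of $T$ (which identifies $\|\pi^l_v\|_2^2 = T^{2l}[v,v]$), gives $\|T^l \mu_S\|_2^2 \leq \tfrac{1}{s}\sum_{v\in S} T^{2l}[v,v]$. Together with $\|T^l \mu_S\|_2^2 = s^{-2}\langle \mathbf{1}_S, T^{2l}\mathbf{1}_S \rangle \geq (1-\delta)^{2l}/s$, averaging produces a vertex $v \in S$ with $T^{2l}[v,v] \geq (1-\delta)^{2l}/s$, and hence $D_l(v)^2 = T^{2l}[v,v] - 1/n \geq (1-\delta)^{2l}/s - 1/n$.

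The last step is a purely deterministic inequality comparing this lower bound to the target $(1-4\delta)^{2l}/(4s)$. Since $(1-\delta)^2 \geq 1 - 2\delta \geq 1 - 4\delta$, one has $(1-\delta)^{2l} \geq (1-4\delta)^l \geq (1-4\delta)^{2l}$, so the leading $(1-\delta)^{2l}/s$ already exceeds $(1-4\delta)^{2l}/s$ with a factor-of-four buffer against the $1/n$ correction. The main obstacle I anticipate is exactly this final bookkeeping: when $(1-\delta)^{2l}/s$ comfortably dominates $1/n$ the conclusion is immediate, but in the regime where $l$ is large enough that both quantities approach $1/n$ one likely needs a short case split, or a sharper bound that exploits the positivity of $T$ more aggressively, to secure the claimed $D_l(v) > (1-4\delta)^l/(2\sqrt{s})$.
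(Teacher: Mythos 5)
The paper itself gives no proof of this lemma --- it is quoted from Kale--Seshadri --- so your proposal must be judged against the standard spectral argument, and there it has a genuine gap, located exactly at the step you dismiss as ``final bookkeeping.'' The problem is that you apply Jensen to the raw indicator $\mathbf{1}_S$ rather than to its projection away from the all-ones vector. The eigenvalue-$1$ eigenvector $\mathbf{1}/\sqrt{n}$ contributes $\langle \mathbf{1}_S,\mathbf{1}\rangle^2/n = s^2/n$ to $\langle\mathbf{1}_S,T^{2l}\mathbf{1}_S\rangle$ \emph{independently of $l$}, so your conclusion $T^{2l}[v,v]\ge (1-\delta)^{2l}/s$ for some $v$ can be entirely accounted for by the stationary component. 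Once $(1-\delta)^{2l}\le s/n$ --- which happens for $s=\Theta(n)$ as soon as $l$ exceeds a constant multiple of $1/\delta$, and the expansion tester uses $l=\Theta(\log n/\Phi^2)$ with $\delta=O(\Phi^2)$, far inside this regime --- your lower bound $D_l(v)^2\ge (1-\delta)^{2l}/s-1/n$ is non-positive, while the target $(1-4\delta)^{2l}/(4s)$ is strictly positive. No case split rescues this: the estimate itself is too weak, not merely awkward to compare.

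The fix is structural but small: run your argument on $w=\mathbf{1}_S/s-\mathbf{1}/n$ instead of $\mathbf{1}_S$. Then $\|w\|_2^2=1/s-1/n\ge 1/(2s)$, and a computation identical to your first step gives $\langle w,Tw\rangle/\|w\|_2^2 = 1-\Phi_G(S)/(1-s/n)\ge 1-2\delta$ using $s\le n/2$ and $T\mathbf{1}=\mathbf{1}$. Your Jensen step, now applied in the eigenbasis restricted to $\mathbf{1}^{\perp}$, yields $\langle w,T^{2l}w\rangle\ge \|w\|_2^2(1-2\delta)^{2l}\ge (1-2\delta)^{2l}/(2s)$. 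Since $T^l w=\tfrac1s\sum_{v\in S}(\pi_v^l-\tfrac{1}{n}\mathbf{1})$, the triangle inequality gives $\max_{v\in S}D_l(v)\ge\|T^lw\|_2\ge (1-2\delta)^l/\sqrt{2s} > (1-4\delta)^l/(2\sqrt{s})$, which is the claim. Your opening computation of $\langle\mathbf{1}_S,T\mathbf{1}_S\rangle$, the laziness-implies-PSD observation, and the Jensen and Cauchy--Schwarz machinery are all correct and reusable; the only error is the vector you feed them, and that error is fatal in the parameter regime the lemma is actually invoked for.
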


We can now describe our algorithm for testing expansion. After each update, we use Theorem~\ref{thm:matrix-pow-approx} to obtain the matrix $\tilde{T}$ such that $|\tilde{T} - T^k| \leq 1/n^3$, where $k = \log n/\Phi^2$. Therefore for each $v$, we $\tilde{T}(v,v)$ such that $|\tilde{T}[v,v] - \sum_{u\in V} \pi^\ell_v(u)^2| \leq 1/n^3$. We now test if $\tilde{T}[v,v] \leq \frac{1}{n}\left(1+ \frac{2}{n}\right)$ for each $v \in G$, and reject if this is not the case even for one $v\in G$. The correctness of this algorithm follows from the two lemmas stated below.

\begin{lemma}
	If $\lambda_G \leq \alpha$, then $\tilde{T}[v,v] \leq \frac{1}{n}\left(1+ \frac{2}{n}\right)$ for every $v \in G$.
	\label{lem:exp-yes-case}
\end{lemma}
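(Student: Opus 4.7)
The plan is to use the spectral decomposition of the symmetric lazy transition matrix $T$ to show that when $\lambda_G \leq \alpha$, the diagonal entry $T^k[v,v]$ is within $1/n^2$ of the uniform value $1/n$, after which the approximation guarantee $|\tilde{T}[v,v] - T^k[v,v]| \leq 1/n^3$ provided by Theorem~\ref{thm:matrix-pow-approx-intro} will close the remaining gap.

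First I would set up the spectral decomposition. Since $G$ is being viewed as $2d$-regular (adding $2d - d(v)$ self-loops at each $v$), the lazy transition matrix $T$ is symmetric and doubly stochastic, and laziness makes $T$ positive semidefinite. Hence its eigenvalues satisfy $1 = \lambda_1 \geq \lambda_2 \geq \cdots \geq \lambda_n \geq 0$, with $\lambda_2 = \lambda_G \leq \alpha$ and top eigenvector $u_1 = \mathbf{1}/\sqrt{n}$. Writing $T = \sum_i \lambda_i u_i u_i^T$ yields $T^k = \tfrac{1}{n} J + \sum_{i \geq 2} \lambda_i^k u_i u_i^T$, so for every $v$, using that $\{u_i\}$ is orthonormal (hence $\sum_i u_i[v]^2 = 1$),
\begin{align*}
T^k[v,v] \;=\; \frac{1}{n} + \sum_{i \geq 2} \lambda_i^k\, u_i[v]^2 \;\leq\; \frac{1}{n} + \alpha^k \sum_{i \geq 2} u_i[v]^2 \;\leq\; \frac{1}{n} + \alpha^k.
\end{align*}

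Next I would argue that the value $k$ chosen by the algorithm forces $\alpha^k \leq 1/n^2$. Since $\log(1/\alpha) \geq 1 - \alpha$ for $\alpha \in (0,1)$, any $k \geq 2\log n/(1-\alpha)$ suffices, and this is consistent with the algorithm's choice $k = \log n/\Phi^2$ via the standard inequality $\Phi_G^2 \leq 2(1-\lambda_G)$ together with $\Phi_G \geq 1-\lambda_G \geq 1-\alpha$. Combining this with the approximation guarantee then gives
\begin{align*}
\tilde{T}[v,v] \;\leq\; T^k[v,v] + \frac{1}{n^3} \;\leq\; \frac{1}{n} + \frac{1}{n^2} + \frac{1}{n^3} \;\leq\; \frac{1}{n}\left(1 + \frac{2}{n}\right),
\end{align*}
which is precisely the claimed bound.

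The main (minor) obstacle is the parameter bookkeeping: verifying that the specific constants hidden in $k = \log n/\Phi^2$ really do make $\alpha^k$ fall below $1/n^2$ uniformly over the regime of $\alpha$ for which expansion testing is defined, and confirming that the positive semidefiniteness coming from laziness genuinely rules out a contribution from a large-magnitude negative eigenvalue (otherwise one would have to worry about $|\lambda_n|^k$, not just $\alpha^k$). Once these are in place, the argument reduces to the two-line spectral bound above combined with a single application of the triangle inequality.
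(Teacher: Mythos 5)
Your proposal is correct and follows essentially the same route as the paper: bound the deviation of the length-$k$ walk distribution from uniform by $\lambda_G^k \leq \alpha^k \leq 1/n^2$, identify the diagonal entry of the powered matrix with $\tfrac{1}{n}$ plus that deviation, and finish with the $1/n^3$ approximation guarantee. In fact your explicit spectral decomposition supplies the justification for the step the paper merely asserts (that $D_\ell(v)^2 \leq 1/n^2$), so the only remaining bookkeeping --- that $k=\log n/\Phi^2$ with $\Phi = 1-\alpha$ indeed forces $\alpha^k \leq 1/n^2$, and that laziness kills negative eigenvalues --- is exactly what you already flagged.
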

\begin{proof}
	If $\lambda_G \leq \alpha$, then $\Phi_G \geq 1 - \alpha = \Phi$. Now,
	\begin{align*}
	D_\ell(v)^2 &= \lVert \pi^\ell_v - \frac{1}{n} \rVert_2^2 \leq \frac{1}{n^2}.
	\end{align*}
	Therefore, $T^{2l}[v,v] = \sum_{u\in V} \pi^\ell_v(u)^2 \leq \frac{1}{n}\left( 1 + \frac{1}{n} \right)$. Since $|\tilde{T}[v,v] - T^{2\ell}[v,v]| \leq 1/n^3$, we have $\tilde{T}[v,v] \leq \frac{1}{n}\left(1 + \frac{2}{n}\right)$ for every $v\in V$.
\end{proof}

\begin{lemma}
	If $\lambda_G \geq \alpha'$, then there exists a vertex $v \in G$ such that $\tilde{T}[v,v] > \frac{1}{n}\left(1+ \frac{2}{n}\right)$.
	\label{lem:exp-no-case}
\end{lemma}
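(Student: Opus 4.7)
The plan mirrors Kale and Seshadri's soundness analysis, applied to the approximation $\tilde T$ rather than the exact transition matrix. The three ingredients are the Cheeger-type inequality relating $\lambda_G$ to conductance, Lemma~\ref{lem:ks-low-cond} which extracts a witness vertex from a sparse cut, and the approximation guarantee of Theorem~\ref{thm:matrix-pow-approx}.

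First I would convert the spectral assumption into a conductance assumption. Using the upper half of the inequality $\lambda_G \leq 1 - \Phi_G^2/2$ from the excerpt, the hypothesis $\lambda_G \geq \alpha' = 1 - (1-\alpha)^2/5000$ forces $\Phi_G^2 \leq (1-\alpha)^2/2500$, i.e.\ $\Phi_G \leq \Phi/50$ where $\Phi := 1 - \alpha$. Pick a cut $(S,\overline S)$ realising this, with $|S| = s \leq n/2$, and set $\delta := \Phi/50$.

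Next I would invoke Lemma~\ref{lem:ks-low-cond} with this $\delta$ and with $\ell$ the walk length used by the algorithm (of order $\log n/\Phi^2$), producing a vertex $v \in S$ with
\begin{align*}
D_\ell(v) \;>\; \frac{1}{2\sqrt{s}}(1 - 4\delta)^\ell \;\geq\; \frac{1}{\sqrt{2n}}\bigl(1 - 2\Phi/25\bigr)^\ell.
\end{align*}
Squaring and using the identity $D_\ell(v)^2 = T^{2\ell}[v,v] - 1/n$ from the discussion preceding Lemma~\ref{lem:exp-yes-case} gives
\begin{align*}
T^{2\ell}[v,v] \;>\; \frac{1}{n} + \frac{1}{2n}\bigl(1 - 2\Phi/25\bigr)^{2\ell}.
\end{align*}

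The only quantitative step, and the one I expect to be the main obstacle, is showing that for the specific walk length $\ell = \Theta(\log n/\Phi^2)$ used by the algorithm, the correction term $\tfrac{1}{2n}(1 - 2\Phi/25)^{2\ell}$ exceeds $\tfrac{2}{n^2} + \tfrac{1}{n^3}$. This is precisely where the constant $5000$ in the definition of $\alpha'$ was calibrated: a first-order Taylor bound gives $\ln(1 - 2\Phi/25) \geq -\Phi/10$, so $(1 - 2\Phi/25)^{2\ell} \geq \exp(-\Phi\ell/5)$, and choosing the algorithm's walk length sufficiently small as a function of $\Phi$ and $\log n$ (the calibration that surfaces as the constant $5000$ in \cite{KaleS11}) makes this at least $n^{-\eta}$ for some $\eta < 1$, which dominates $5/n$ for large enough $n$.

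Finally, feeding this into the matrix approximation bound $|\tilde T[v,v] - T^{2\ell}[v,v]| \leq 1/n^3$ supplied by Theorem~\ref{thm:matrix-pow-approx} propagates the inequality to
\begin{align*}
\tilde T[v,v] \;>\; T^{2\ell}[v,v] - \frac{1}{n^3} \;>\; \frac{1}{n}\Bigl(1 + \frac{2}{n}\Bigr),
\end{align*}
yielding the required witness vertex. Apart from the constant-chasing in the third step, every other line is a direct substitution of results already in the excerpt.
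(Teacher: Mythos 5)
Your proposal follows the paper's proof step for step: the Cheeger inequality to extract a sparse cut, Lemma~\ref{lem:ks-low-cond} to produce a witness vertex $v$ with large $D_\ell(v)$, the identity $\sum_{u}\pi^\ell_v(u)^2 = D_\ell(v)^2 + \tfrac{1}{n}$, and finally the $1/n^3$ approximation guarantee. The one substantive divergence is the very first step, and it is exactly where the difficulty you flag as ``the main obstacle'' originates. From $\lambda_G \geq \alpha' = 1 - \Phi^2/5000$ and $\lambda_G \leq 1 - \Phi_G^2/2$ you (correctly) deduce only $\Phi_G \leq \Phi/50$, a bound \emph{linear} in $\Phi$. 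The paper instead asserts $\Phi_G \leq k\Phi^2$ for a small constant $k$, a \emph{quadratic} bound, and this is what makes its subsequent calculation clean: $(1-4k\Phi^2)^{2\ell}$ with $\ell = \Theta(\log n/\Phi^2)$ is $n^{-\Theta(k)}$, the $\Phi$-dependence cancelling. With your linear bound the same computation gives $(1-\Theta(\Phi))^{2\ell} = n^{-\Theta(1/\Phi)}$, which dominates $1/n$ only when $\Phi$ exceeds an absolute constant. Your proposed escape, choosing the walk length ``sufficiently small as a function of $\Phi$'', cannot be executed freely, because Lemma~\ref{lem:exp-yes-case} pulls in the opposite direction: completeness needs $\ell$ large enough that the distance to stationarity has decayed to $O(1/n^2)$. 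So the calibration you defer is not routine constant-chasing; it is the crux, and as sketched it does not close for small $\Phi$ (one would have to retune $\ell$ to $\Theta(\log n/\Phi)$ and redo both lemmas, or assume $\Phi = \Omega(1)$).

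That said, you have not introduced an error that the paper avoids; you have surfaced one it glosses over. The quadratic bound $\Phi_G \leq k\Phi^2$ with which the paper's proof opens does not follow from the stated Cheeger inequality together with the definition of $\alpha'$; your linear bound is what the hypotheses actually yield. The honest conclusion is that both arguments share the same soft spot --- yours visibly, in the step you explicitly flag, and the paper's hidden inside an unjustified first line.
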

\begin{proof}
	If $\lambda_G \geq \alpha'$, then we know that $\Phi_G \leq k\Phi^2$. Therefore, there exists a vertex cut $(S,\overline{S})$ such that $\Phi_G(S) \leq k\Phi^2$. From Lemma~\ref{lem:ks-low-cond} we can conclude that there exists a vertex $v$ such that $D_\ell(v)^2 > \frac{1}{4s} (1 - 4k\Phi^2)^{2\ell} \geq \frac{1}{2n}(1 - 4k\Phi^2)^{2\ell}$. For $\ell=\ln n/8\Phi^2$, and a sufficiently small $k < 1$, we have $D_\ell(v)^2 > \frac{1}{2n^{1+\epsilon}}$ for a small constant $\epsilon > 0$. The collision probability $\sum_{u\in V} \pi^\ell_v(u)^2$ is therefore at least $\frac{1}{n}\left(1 + \frac{1}{n^\epsilon}\right)$. From Theorem~\ref{thm:matrix-pow-approx}, we know that $\tilde{T}[v,v] \geq \frac{1}{n}\left(1 + \frac{1}{2n^\epsilon}\right) > \frac{1}{n}\left(1 + \frac{2}{n}\right)$.
\end{proof}

The key ingredient in the algorithm is a procedure to maintain the logarithmic powers of a weighted adjacency matrix when only a small number of entries change. In the next section we will describe how to do this in $\DynACz$.

\section{Maintaining the logarithmic power of a matrix}
We are given an $n\times n$
 lazy-transition matrix $T$ that varies dynamically with the
batch insertion/deletion of almost logarithmically 
($O(\frac{log{n}}{\log{\log{n}}})$) many edges per time step.
We want to maintain each entry of sum of powers: $\sum_{i=0}^{\log{n}}{(xT)^i}$.
Notice that the exponent $\log{n}$ arises from the Kale-Seshadri expansion-testing
algorithm which needs the probabilities of walks of length
$\log{n}$. On the other hand, the almost logarithmic bound on the small number 
of changes is a consequence of the reductions described below from the
dynamic problem above to ultimately, determinants of small matrices and
interpolation of small degree polynomials. Here interpolation can be done for
degrees up to polylogarithmic but known techniques \cite{DMVZ18}
permit determinants of  at most almost logarithmic size in $\ACz$ yielding
this bottleneck. Another way to view this bottleneck is: while from 
Lemmata~\ref{lem:undir-conn-subexp},~\ref{lem:borrowed}, polylogarithmic
length inputs of languages in $\Log$ (or even $\mathsf{NL}$: see \cite{DMVZ18})
can be decided in $\ACz$, such bounds
are not known for languages reducible to determinants.

\begin{definition}
A $b$-bit rational is a pair consisting of an integer $\alpha$
a natural number $\beta$ such that $|\alpha| < \beta \leq 2^b$.
Its value is $\frac{\alpha}{\beta}$.
By a mild abuse of notation we conflate the pair $(\alpha,\beta)$ with its
value $\frac{\alpha}{\beta}$. 
\end{definition}
\begin{remark}\label{rem:defRemark}
First, notice that every $b$-bit rational is smaller than $1$ by definition.
Second, a $B$-bit approximation $\tilde{r}$ to a rational $r$ may itself 
be a $b$-bit rational for some $b \neq B$. This is because 
the two statements $|r - \tilde{r}| \leq 2^{-B}$ and 
$\tilde{r} = \frac{\alpha}{\beta}$ where $|\alpha| < \beta \leq 2^b$ are 
independent.
\end{remark}

We need some definitions and begin with the definition of a dynamic matrix and
the associated problems of maintaining dynamic matrix powers.
\begin{definition}\label{def:dynamicMatrix}
Let $l \in \N$. 
A matrix $A \in \Q^{n \times n}[x]$ is said to be $(n,d,b,l)$-dynamic if:
\begin{itemize}
\item each coefficient of the polynomials is a $b$-bit rational
\item at every step there is a change in the entries of some $l \times l$ 
submatrix of $A$ to yield a new matrix $A'$. The change matrix 
$\Delta A = A' - A$
\end{itemize}
\end{definition}

\begin{definition}
 $\DynMatPow(n,d,b,k,l)$ is the problem of maintaining the value of each
entry of $\sum_{i=0}^k(xA)^i$ for a $(n,d,b,l)$-dynamic matrix.

Let $\DynBipMatPow(n,d,b,k,l)$ be the special case of $\DynMatPow(n,d,b,k,l)$
where the change matrix $\Delta A$ has a support that is a bipartite graph
with all edges from one bipartition to another.
\end{definition}

Next, we define problems to which the dynamic problems will be reduced to.
We begin with polynomial matrix powering. The last condition in the following
bounding the constant term of entries of the powered matrix 
is a technical one for controlling the error.

\begin{definition}
Let $\MatPow(n,d,b,k)$ be the problem of determining $\sum_{i=0}^k (xA)^k$ for 
a matrix $A \in \mathbb{Q}^{n \times n}[x]$ where all the following hold:
\begin{itemize}
\item the degree of the polynomials is upper bounded by $d$ 
\item each coefficient is a $b$-bit rational
\item the constant term of each polynomial entry is upper bounded by 
${(3n)}^{-1}$
\end{itemize}
\end{definition}
The next group of definitions involve the problems we ultimately reduce the
intermediate matrix powering algorithm to. THese include various determinant
problems, polynomial interpolation and polynomial division.

\begin{definition}
Let $\Det(n,b,v)$ be the problem of computing the value of the determinant of
an $n \times n$ matrix with entries that are 
$b$-bit rationals bounded by $v < 1$ in magnitude. 

Let $\Det_p(n)$ be the problem of computing the value of the determinant of
an $n\times n$ matrix with entries that are from $\mathbb{Z}_p$ for a prime
$p$.

Let $\DetPoly(n,d,b)$ be the problem of computing the value of the determinant 
of an $n \times n$ matrix with entries that are degree $d$ polynomials of
$b$-bit rational coefficients.
\end{definition}

\begin{definition}
Let $\Interp(d,b)$ be the problem of computing the coefficients of a univariate
polynomial of degree $d$, 
where the coefficients are rationals (not necessarily smaller than one)
and where $d+1$ evaluations of the
polynomial on $b$-bit rationals are given.
\end{definition}

\begin{definition}
Let $\Div(n,m,b)$ the problem of computing the quotient of a univariate 
polynomial $g(x)$ of degree $n$ when it is divided by a polynomial $f(x)$
of degree $m$ where both polynomials are monic with other entries being 
$b$-bit rationals.
\end{definition}

In the rest of this section, we will use the following variables consistently:
\begin{itemize}
\item $n$ number of nodes in the graph
\item $l = O(\frac{\log{n}}{\log{\log{n}}})$ the number of changes in one batch
\item $k = O(\log{n})$ the exponent to which we want to raise the transition matrix
\item $b = \log^{O(1)}{n}$ the number of bits in the repersentation
\item $d \leq \log^{O(1)}{n}$ the degree of a polynomial
\end{itemize}

Let us start with the first lemma above:
\begin{lemma}\label{lem:preproc}
$\DynMatPow(n,d,b,k,l)$ reduces to $\DynBipMatPow(2n,d,b,2k,l)$ via
a local\footnote{That is, changing a ``small'' submatrix of the input dynamic 
matrix results in ``small'' small submatrix change in the output matrix of the
reduction. The notion of smallness being almost logarithmic.} $\ACz$-reduction.
\end{lemma}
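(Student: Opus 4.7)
The plan is to apply the standard bipartite double cover. Given an $(n,d,b,l)$-dynamic matrix $A$, define
$$B = \begin{pmatrix} 0 & A \\ A & 0 \end{pmatrix} \in \Q^{2n\times 2n}[x],$$
with the vertex bipartition $V_1 = \{1,\ldots,n\}$ and $V_2 = \{n+1,\ldots,2n\}$. Since all nonzero entries of $B$ (and of every $\Delta B$ induced by an update to $A$) lie in the two off-diagonal blocks, each such update has bipartite support with respect to $V_1, V_2$, which is exactly what $\DynBipMatPow$ requires.

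The key algebraic identity, proved by a straightforward induction on $j$, is
$$B^{2j} = \begin{pmatrix} A^{2j} & 0 \\ 0 & A^{2j}\end{pmatrix}, \qquad B^{2j+1} = \begin{pmatrix} 0 & A^{2j+1} \\ A^{2j+1} & 0 \end{pmatrix}.$$
Consequently, for every $i$ the sum of the top-left and top-right $n\times n$ blocks of $B^i$ equals $A^i$. Summing over $i=0,\dots,2k$ then gives $\sum_{i=0}^{2k}(xA)^i$ as the block-sum of the corresponding blocks of $\sum_{i=0}^{2k}(xB)^i$—the quantity maintained by $\DynBipMatPow(2n,d,b,2k,l)$. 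The target $\sum_{i=0}^{k}(xA)^i$ demanded by $\DynMatPow(n,d,b,k,l)$ is then obtained either by running the same subroutine with exponent $k$ in place of $2k$ (the $2k$ in the statement is slack), or by reading off the individual block sums $[B^i]_{\mathrm{TL}} + [B^i]_{\mathrm{TR}}$ for $i \le k$ and reassembling. Both directions of the reduction consist of rewiring and at most a single addition per output entry, hence are locally computable in $\ACz$.

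As for update locality, an $l\times l$ change in $A$ with row set $I$ and column set $J$ corresponds to a change in $B$ with row set $I \cup (I+n)$ and column set $J \cup (J+n)$, i.e., a $2l\times 2l$ submatrix—still bipartite by construction, and since the paper operates in the asymptotic regime $l = O(\log n/\log\log n)$ this constant-factor blow-up is absorbed. The main bookkeeping in the formal proof is to verify the block identity and to check that both $A\mapsto B$ and the block-sum decoding are constant-depth, polynomial-fanin operations. I expect no real obstacle: both are pure copy-and-add of polynomial coefficients, and the bipartite support claim is immediate from the off-diagonal placement of the nonzero blocks of $B$.
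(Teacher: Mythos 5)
Your block identity for the double cover $B = \bigl(\begin{smallmatrix} 0 & A \\ A & 0 \end{smallmatrix}\bigr)$ is algebraically correct, but the reduction fails at the one point that actually matters: the bipartite-support condition on the \emph{change} matrix. An update $\Delta A$ to $A$ induces $\Delta B = \bigl(\begin{smallmatrix} 0 & \Delta A \\ \Delta A & 0 \end{smallmatrix}\bigr)$, whose support contains edges from $V_1$ to $V_2$ (top-right block) \emph{and} from $V_2$ to $V_1$ (bottom-left block). The condition in the definition of $\DynBipMatPow$ --- ``all edges from one bipartition to another'' --- is directed, because it is consumed by Lemma~\ref{lem:genHesse}, which requires a partition $U_i,U_o$ of the affected vertices with \emph{every} inserted/deleted edge going from $U_i$ to $U_o$; that one-directionality is what lets Hesse's walk-decomposition argument alternate between $A_H$-segments and single changed edges. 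In your construction the changed edges have source set $I\cup(I+n)$ and target set $J\cup(J+n)$ (where $\mathrm{supp}(\Delta A)\subseteq I\times J$), and since $A$ is a symmetric adjacency matrix an edge update gives $I=J$, so no valid $U_i/U_o$ split exists. The paper avoids this exactly by taking $B = \bigl(\begin{smallmatrix} 0 & A \\ I_n & 0 \end{smallmatrix}\bigr)$: the identity block is static, so $\Delta B$ lives only in the top-right block and all changed edges go from the first copy to the second, unconditionally.

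Two smaller points. First, with the paper's $B$ one has $B^2 = \mathrm{diag}(A,A)$, so $B^{2k}$ yields $A^k$; the exponent $2k$ in the target problem is therefore essential, not slack as you suggest. Second, your locality analysis (a $2l\times 2l$ changed submatrix, absorbed by the asymptotics) is fine and carries over to the corrected construction, where the change is in fact only an $l\times l$ submatrix of the top-right block.
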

\begin{proof}
Let $A$ be an $(n,d,b,l)$-dynamic matrix.
Let $B$ be the following $2 \times 2$ block matrix with entries from 
$\mathbb{Q}^{n\times n}$:
\[
\left(
\begin{array}{cc}
0_n & A \\
I_n & 0_n \\
\end{array}
\right).
\]
Here $0_n,I_n$ are respectively the $n\times n$ all zeroes, identity matrices.
Then clearly,
\[
B^{2k} = 
\left(
\begin{array}{cc}
A^k & 0_n \\
0_n & A^k \\
\end{array}
\right)
\]
Notice that:
\[
B' - B = 
\left(
\begin{array}{cc}
0_n & A'-A \\
0_n & 0_n \\
\end{array}
\right),
\]
is a directed bipartite graph with all edges from the first partition of $n$
vertices to the second partition of $n$ vertices, completing the proof.
\end{proof}

\subsection{Generalising Hesse's construction}
\label{subsec:hesse}
Let $G$ be a weighted directed graph with a weight function $w:E \to \mathbb{R}^+$ and weighted adjacency matrix $A$. Let 
$H = H^{(k)}_G(x)$ denote the weighted graph with weighted adjacency matrix 
 $A_H = \sum_{i=0}^k{(xA)^i}$ where $k$ is an integer and
$x$ is a formal (scalar) variable. Let $G'$ be a graph on the vertices
of $G$ differing from $G$ in a ``few'' edges and $A'$ be its adjacency matrix.
Denote by $\Delta A = A' - A$. Notice that $\Delta A$ contains both positive
and negative entries. Let $\Delta_+ A$ be the matrix consisting of the positive 
and $\Delta_- A$ of the negative entries of $\Delta A$. Let $U$ be the
affected vertices i.e. the vertices
on which any of the inserted/deleted edges in $\Delta A$ 
(i.e. the support of the edges whose
adjacency matrices are $\Delta_+ A$ and $-\Delta_- A$) are incident. 
\begin{lemma}\label{lem:genHesse}
Suppose there exists a partition of the affected vertices into two sets 
$U_i,U_o$ such that all inserted and deleted edges are from a vertex in
$U_i$ to a vertex in $U_o$. Consider the matrices $\Delta_{\sigma}$, for $\sigma \in \{+,-\}$, of dimension $|U|+2$,
viewed as a weighted adjacency matrix of a graph on $U \cup \{s,t\}$
where $s,t \in V(G) \setminus U$, and whose entries are defined as
below:

\[
\Delta_\sigma[u,v] = \left\{
\begin{array}{ll}
\sigma w_{uv}x &       \text{if $u \in U_i$ and $v \in U_o$}\\ 
A_H[u,v] & \text{if $u \in U_o$ and $v \in U_i$} \\
A_H[s,v] & \text{if $u = s$ and $v \in U_i$} \\
A_H[u,t] & \text{if $v = t$ and $u \in U_o$} \\
0  & 	   \text{otherwise}\\
\end{array}
\right.
\]

Then the number of $s,t$ walks in $G'$ of length $k \leq \ell$
are given by the coefficient of $x^k$ in $A_H[s,t] + \Delta_\sigma^k[s,t]$.
\end{lemma}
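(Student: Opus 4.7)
The plan is to prove the formula by decomposing walks in $G'$ according to their uses of changed edges and matching the result to matrix powers of $\Delta_\sigma$. First I would expand $(A')^k=(A+\Delta A)^k$ by distributivity: each of the $2^k$ resulting terms is indexed by a subset $S\subseteq\{1,\ldots,k\}$ of positions at which the $\Delta A$ factor is used (the remaining positions contribute $A$-factors). If $|S|=j$ and the gaps between consecutive $\Delta A$ positions are $q_0,\ldots,q_j$ with $\sum_r q_r=k-j$, the corresponding term is $A^{q_0}\,\Delta A\,A^{q_1}\,\Delta A\cdots\Delta A\,A^{q_j}$. The support hypothesis forces any contributing walk to pass through $U_i$ immediately before and $U_o$ immediately after each $\Delta A$-step. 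The $j=0$ terms sum to $A^k[s,t]$, which by $k\le\ell$ equals the coefficient of $x^k$ in $A_H[s,t]$.

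Second I would translate the $j\ge 1$ contribution to the coefficient of $x^k$ in
\[
\sum_{\substack{u_1,\ldots,u_j\in U_i\\ v_1,\ldots,v_j\in U_o}} A_H[s,u_1]\,(\sigma w_{u_1v_1}x)\,A_H[v_1,u_2]\cdots(\sigma w_{u_jv_j}x)\,A_H[v_j,t],
\]
where $\sigma\in\{+,-\}$ is the common sign of the changes in the current batch; a batch containing both insertions and deletions is handled by splitting into two phases ($\sigma=+$ and $\sigma=-$) and applying the lemma twice. The identification I need is that this sum equals the $(s,t)$ entry of $\Delta_\sigma^{2j+1}$: by the bipartite-alternating shape of the nonzero entries of $\Delta_\sigma$, any $s$-to-$t$ walk in the graph underlying $\Delta_\sigma$ has length $2j+1$ for some $j$ and is forced to follow the pattern $s\to u_1\to v_1\to u_2\to\cdots\to v_j\to t$, picking up exactly the above product of polynomial weights.

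Finally I would aggregate over $j$. Because $\Delta_\sigma^m[s,t]=0$ for every even $m$ (the walk cannot alternate correctly to land at $t$ in an even number of steps), the collection $\{\Delta_\sigma^m[s,t]\}_m$ reduces to its odd-index members, which together encode precisely the walks in $G'$ using at least one changed edge, segregated by $j$. Summing these and adding the $j=0$ contribution $A_H[s,t]$ yields the stated formula, where the expression $\Delta_\sigma^k[s,t]$ in the statement serves as shorthand for the appropriate aggregate over odd matrix powers that arises from the generating-function structure of the argument.

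The main obstacle I expect is controlling the truncation of $A_H$ at degree $\ell$: in principle a $G$-segment could be longer than $\ell$, in which case its contribution would be silently dropped. The hypothesis $k\le\ell$ is exactly what is needed, since every $A$-segment appearing in the decomposition sits inside a length-$k$ walk and hence has length at most $k\le\ell$, so no relevant coefficient of $x^k$ is lost. A secondary bookkeeping point is verifying that the $\sigma^j$ factors correctly carry the signs of insertions ($\sigma=+$) and deletions ($\sigma=-$) when the lemma is applied to a uniform-sign batch, and that the two-phase reduction handles mixed batches without double counting.
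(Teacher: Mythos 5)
Your proof is correct and rests on the same underlying idea as the paper's (Hesse-style decomposition of $s$-$t$ walks in $G'$ by their uses of changed edges, with signs handling deletions), but you organize the inclusion--exclusion dually, and the comparison is worth recording. The paper treats $\sigma=+$ by a direct ``each new walk is counted once'' argument and $\sigma=-$ by fixing a walk $P$ through deleted edges and showing its net signed multiplicity is $\prod_{i=1}^r\bigl(\sum_{l_i=0}^{k_i}(-1)^{l_i}\binom{k_i}{l_i}\bigr)=0$; you instead start from the identity $(A')^k=(A+\Delta A)^k$, expand over subsets $S$ of positions, and identify each $j=|S|$ stratum with the coefficient of $x^k$ in $\Delta_\sigma^{2j+1}[s,t]$. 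Your version buys a uniform treatment of both signs (the cancellation for deletions is automatic from the algebra rather than argued per walk, and one recovers the paper's binomial identity by regrouping your expansion by underlying walk), at the cost of having to verify the bipartite-alternation structure of the gadget explicitly; the paper's version is closer to Hesse's original combinatorial narrative. Two of your side observations are genuinely valuable: (i) the literal expression $\Delta_\sigma^k[s,t]$ in the statement must indeed be read as the aggregate $\sum_{m\leq k}\Delta_\sigma^m[s,t]$ (only odd $m$ contribute), a point the paper's own proof also elides --- and one can sharpen your remark by noting that since $s\neq u_1$, $v_j\neq t$, and $v_i\neq u_{i+1}$ force every $A_H$-segment to have length at least one, a length-$k$ walk with $j$ changed steps satisfies $2j+1\leq k$, so the aggregate is finite and bounded by the $k$-th power; and (ii) your explicit check that $k\leq\ell$ prevents any loss from the truncation of $A_H$, which the paper leaves implicit.
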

\begin{proof}
	We will prove this separately for the cases $\sigma = +$ and $\sigma= -$. The proof follows the general strategy of Hesse's proof in \cite{Hes}.
	
	When $\sigma=+$, we insert edges into the graph $G$. When new edges are added to $G$, the total number of walks from $s$ to $t$ is the sum of the number of walks that are already present and the new walks due to the insertion of the new edges. Observe that all the $s-t$ walks in the graph on $U \cup \{s,t\}$ must pass through the new edges and every such walk of length $l$ is counted exactly once in $\Delta_+^k[s,t]$.
	
	The more interesting case is when $\sigma=-$, and edges are deleted from $G$. In this case we $A_H[s,t]$ contains all walks from $s$ to $t$ including the deleted edges, and we need to delete only those walks that contain at least one edge that is deleted. The proof follows along the same lines as Hesse's proof when a single edge is deleted. The idea is to show that every walk from $s$ to $t$ of length $l$ is counted exactly once in $A_H[s,t] + \Delta_{-}^k[s,t]$.
	
	Let $P$ be any $s-t$ walk in $G$ that contains edges that are deleted. Firstly, $P$ is counted exactly once in $A_H[s,t]$. Suppose that $r$ of the deleted edges occur in $P$ and the $i^{th}$ edge occurs $k_i$ times. Among the $k_i$ occurrences of the $i^{th}$ edge we can choose $l_i$ occurrences, for each $i$, and this gives a walk where these are the edges from $U_i$ to $U_o$ that we choose in graph on $U \cup \{s,t\}$, and the remaining are counted in the walk from $s$ to $U_i$, $U_i$ to $U_o$ and $U_o$ to $t$. There are $\prod_{i=1}^r\binom{k_i}{l_i}$ such choices, and for each choice the corresponding summand for the walk in $\Delta_-^k[s,t]$ is $(-1)^{l_1+l_2+\cdots+l_r}$. When $l_1=l_2=\ldots=l_r=0$, the walk is counted in $A_H[s,t]$. Therefore, the contribution of the walk $P$ to the sum is given by
	\begin{align*}
	\sum_{l_1 = 0}^{k_1}\sum_{l_2=0}^{k_2}\cdots\sum_{l_r=0}^{k_r} (-1)^{l_1+l_2+\cdots+l_r} \prod_{i=1}^r\binom{k_i}{l_i} &=
	\sum_{l_1 = 0}^{k_1}\sum_{l_2=0}^{k_2}\cdots\sum_{l_r=0}^{k_r} \prod_{i=1}^r (-1)^{l_i} \binom{k_i}{l_i} \\
	&= \prod_{i=1}^r \left( \sum_{l_i=0}^{k_i} (-1)^{l_i} \binom{k_i}{r_i} \right) = 0.
	\end{align*}
	Since the walks that do not pass through the deleted edges never appear in the new graph on $U \cup \{s,t\}$ that we created and are hence counted in $A_H[s,t]$, this completes the proof for the case $\sigma = -$. 
\end{proof}

From the lemma above, we can conclude the following reduction.

\begin{lemma} 
		$\DynBipMatPow(2n,d,b,k,l)$ reduces to \\ $\MatPow(l,d,b,k)$ via an $\ACz$-reduction.
	\label{lem:bip-to-mat}
\end{lemma}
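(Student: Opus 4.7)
The plan is to use Lemma~\ref{lem:genHesse} to localize the effect of a bipartite batch update. Given the old dynamic state $A_H := \sum_{i=0}^k (xA)^i$ and an update to $A' = A + \Delta A$ whose support lies in $U_i \times U_o$ with $|U_i|, |U_o| \le l$, Lemma~\ref{lem:genHesse} supplies, for each choice of sign $\sigma \in \{+,-\}$, a small $O(l) \times O(l)$ auxiliary matrix $\Delta_\sigma$ whose powers encode the correction needed to transform $A_H$ into $A_H' := \sum_{i=0}^k (xA')^i$. The entries of $\Delta_\sigma$ are either zero, monomials $\pm w_{uv}x$ read directly from the update, or entries of the maintained $A_H$, so constructing $\Delta_+$ and $\Delta_-$ is plainly an $\ACz$-local operation. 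Note moreover that by the bipartite structure, together with the convention $s,t \notin U$, every entry of $\Delta_\sigma$ has zero constant term in $x$, comfortably meeting the $(3n)^{-1}$ hypothesis in the definition of $\MatPow$.

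With the two auxiliary matrices in hand, the reduction invokes $\MatPow(O(l), d, b, k)$ on each of them. Since the $s$-row and $t$-column of $\Delta_\sigma$ depend only linearly on the corresponding row and column of $A_H$, whereas the $U \times U$ core is independent of $(s,t)$, we may either invoke $\MatPow$ once per pair $(s,t)$ in parallel (there are $\poly(n)$ such pairs, well within the $\ACz$ budget) or invoke it once on the $U \times U$ core and then form a bilinear contraction with the appropriate row/column vectors of $A_H$; either choice is constant-depth with polynomial fan-in. The $(s,t)$-entry of the output $A_H'$ is then assembled from $A_H[s,t]$ together with the appropriate coefficients extracted from the $\MatPow$ outputs, following the identity of Lemma~\ref{lem:genHesse}: adding the $\sigma=+$ correction for inserted edges and applying the inclusion-exclusion $\sigma=-$ correction for deleted edges.

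The hard part is careful bookkeeping on two fronts. First, $\MatPow$ returns the generating series $\sum_i (x\Delta_\sigma)^i = \sum_i x^i \Delta_\sigma^i$, which carries an additional $x^i$ factor per power, so matching these coefficients with the expressions $\Delta_\sigma^k[s,t]$ that appear in Lemma~\ref{lem:genHesse} requires a careful polynomial coefficient shift. Second, Lemma~\ref{lem:genHesse} assumes $s,t \notin U$; pairs where $s$ or $t$ lies in $U$ must be handled separately, either by introducing auxiliary sentinel vertices that play the role of an external source and sink, or by directly reading off the needed walk-counts from $A_H$ and the update via a direct enumeration of walks originating or terminating inside $U$. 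Once these two points are settled, the parameter constraints of $\MatPow(l, d, b, k)$—dimension $O(l)$, polylogarithmic polynomial degree, and $b$-bit coefficients—are all inherited by construction, establishing the $\ACz$-reduction.
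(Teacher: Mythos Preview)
Your approach is the same as the paper's: invoke Lemma~\ref{lem:genHesse} to turn the update into powering the small $\Delta_\sigma$ matrices, then hand those off to $\MatPow$. Your additional remarks about the vanishing constant terms and the need to treat the pairs with $s\in U$ or $t\in U$ separately are correct refinements that the paper's terse proof omits.

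There is, however, one genuine gap. You build both $\Delta_+$ and $\Delta_-$ from the same stored $A_H$ and then write $A_{H'}[s,t]$ as $A_H[s,t]$ plus the two corrections in parallel. Lemma~\ref{lem:genHesse} only handles a pure insertion batch or a pure deletion batch; it does not give a single-shot formula for a mixed update. The paper's proof makes this explicit: it first applies the $\sigma=-$ correction to obtain an intermediate $A_{H''}$ (the walk-generating function in the graph with the deleted edges removed), and only then builds $\Delta_+$ using entries of $A_{H''}$, not of $A_H$, to obtain $A_{H'}$. If you build $\Delta_+$ from $A_H$ instead, its $U_o\to U_i$ (and $s\to U_i$, $U_o\to t$) entries still count walks that traverse the deleted edges, so the resulting correction overcounts. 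Hence the reduction is two sequential $\MatPow$ calls rather than two independent ones, and the second call must be fed the output of the first. This is still constant depth overall, so the $\ACz$ claim survives, but your description as written does not establish correctness for batches containing both insertions and deletions.
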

\begin{proof}
	Let $A$ denote the $(2n,d,b,l)$-dynamic matrix such that the support of the changes is a bipartite graph. Let $A_H = \sum_{i=0}^k (xA)^i$. From Lemma~\ref{lem:genHesse}, we know that if $l$ entries of $A$ change, then the $(s,t)$ entry in the new sum, $A_{H'}[s,t]$, can be computed in two steps, first by computing  $A_H[s,t] + \Delta_{-}^k[s,t]$ to obtain $A_{H''}[s,t]$ and then computing $A_{H'}[s,t]$ as $A_{H''}[s,t] + \Delta_{+}^k[s,t]$. The lemma follows from these observations.
\end{proof}

In the following lemma, we analyze the error incurred in
the matrix $A_{H'}$ due to error in the matrix 
$A_H$:

\begin{lemma}
Let $\tilde{A}_{H}$ be a $b$-bit approximation of the matrix $A_H$, then the corresponding matrix $\tilde{A}_{H'}$ obtained from $\tilde{A}_H$ is a $b-1$-bit approximation of $A_{H'}$.
	\label{lem:error-hesse}
\end{lemma}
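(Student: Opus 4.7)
The approach is a direct entrywise error-propagation analysis along the two-step ``delete-then-insert'' recipe used in the proof of Lemma~\ref{lem:bip-to-mat}. Write $E = \tilde{A}_H - A_H$; by hypothesis each coefficient of each polynomial entry of $E$ has magnitude at most $2^{-b}$.

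The auxiliary matrix $\tilde{\Delta}_\sigma$ of Lemma~\ref{lem:genHesse} agrees with $\Delta_\sigma$ exactly on the new/deleted-edge positions (the $\pm w_{uv}x$ entries) and inherits entries of $\tilde{A}_H$ on the remaining positions. Hence $\tilde{\Delta}_\sigma = \Delta_\sigma + F_\sigma$ where $F_\sigma$ vanishes on the changed-edge positions and is entrywise bounded by $2^{-b}$. I would expand
\[
\tilde{\Delta}_\sigma^{\,k} - \Delta_\sigma^{\,k} \;=\; \sum_{W} W,
\]
the sum running over all length-$k$ words $W$ in the letters $\{\Delta_\sigma, F_\sigma\}$ that contain at least one factor of $F_\sigma$, and bound each such word entrywise using: (i) the matrix has dimension $|U|+2 \le l+2$, so each entry of any length-$k$ product is a sum of at most $(l+2)^{k-1}$ monomials, and (ii) the entries of $\Delta_\sigma$ have coefficients bounded by a small quantity $M$ derived from the $(3n)^{-1}$ constant-term constraint in the definition of $\MatPow$ together with the bounded-degree hypothesis on the underlying graph. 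The dominant contribution comes from words with a single $F_\sigma$, giving an entrywise bound of roughly $k\,(l+2)^{k-1} M^{k-1} \cdot 2^{-b}$; since $l$ is almost logarithmic, $k$ is logarithmic, and $M$ is uniformly small, this is much smaller than $2^{-b}$, and the remaining words in the expansion form a geometric tail dominated by this leading term.

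Substituting into $\tilde{A}_{H''}[s,t] = \tilde{A}_H[s,t] + \tilde{\Delta}_-^{\,k}[s,t]$ yields an entrywise error in $\tilde{A}_{H''}$ of at most $(1+\delta)\cdot 2^{-b}$ for a small $\delta$. Rerunning the same argument for the insertion step — with $\tilde{A}_{H''}$ now playing the role of $\tilde{A}_H$, so the starting error is $(1+\delta)\cdot 2^{-b}$ rather than $2^{-b}$ — inflates the bound by another factor of $(1+\delta)$, giving a final error of $(1+\delta)^2 \cdot 2^{-b}$; choosing $\delta \le \sqrt{2}-1$ keeps this below $2^{-(b-1)}$ and proves the claimed $(b-1)$-bit approximation.

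The main obstacle I anticipate is to establish that $(l+2)^{k-1} M^{k-1}$ is bounded by an absolute constant: the combinatorial blow-up of $(l+2)^{k-1}$ paths must be offset by geometric decay in the size of each path's contribution. Both the almost-logarithmic upper bound on $l$ and the small coefficient bound $M$ (flowing from the $(3n)^{-1}$ constant-term constraint and the bounded-degree hypothesis) are essential here; weakening either would force a more delicate analysis, for example tracking different norms for the constant and higher-degree coefficients separately.
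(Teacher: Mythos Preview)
Your plan is the same expansion-and-bound strategy the paper uses: write $\tilde{\Delta}_\sigma = \Delta_\sigma + F_\sigma$, expand $\tilde{\Delta}_\sigma^{\,k}-\Delta_\sigma^{\,k}$ into words containing at least one $F_\sigma$, isolate the single-error words as dominant, and then add the error from $\tilde{A}_H[s,t]$ itself. The paper's own proof does exactly this, asserting that every coefficient of $\Delta_\sigma$ is at most $1/2$ and obtaining $k/(2^k 2^b)$ for the linear part plus a smaller quadratic remainder; your two-step delete-then-insert packaging with the $(1+\delta)^2$ factor is just a cleaner bookkeeping of the same computation.

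The one place your plan goes wrong is the source of the constant $M$. The $(3n)^{-1}$ bound in the definition of $\MatPow$ constrains only the \emph{constant} term of each entry; the entries of $\Delta_\sigma$ that actually appear are the $A_H[u,v]$ with $u\neq v$, whose coefficient of $x^i$ is $(T^i)[u,v]$, and these can be of order $1$, not $1/(3n)$ and not $1/(l+2)$. With $M$ of order $1$, the quantity $(l+2)^{k-1}M^{k-1}$ is super-polynomial in $n$ and your entrywise bound collapses. What rescues the argument is structure you have not used: because $s,t\notin U$ and $U_i\cap U_o=\emptyset$, every $A_H$ entry in $\Delta_\sigma$ is off-diagonal and hence has zero constant term, so each of the $k$ factors in any walk contributes at least one power of $x$; extracting the coefficient of $x^j$ with $j\le k$ therefore reduces to a product over the degree-one coefficient matrices, which are submatrices of the (doubly) stochastic $T$ and of $\Delta_\pm A$, each with row and column sums at most $1$. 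An $\ell_\infty$/$\ell_1$ operator-norm argument on these matrices, rather than an entrywise count of $(l+2)^{k-1}$ paths, then kills the combinatorial blow-up and yields the claimed $O(2^{-b})$ error per step. The paper's proof is terse on this point (its $1/2$ entrywise bound alone does not offset $(l+2)^{k-1}$), but this is the structural fact doing the real work.
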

\begin{proof}
	Let$\tilde{A}_H = A_H - E$ where $E$ denotes an error-matrix with each entry a polynomial with coefficients upper-bounded by $1/2^b$. Each entry in $\tilde{A}_H$ is represented by a $d$-degree polynomial with $b$-bit rational coefficents.
	
	We can compute $\tilde{A}_{H'}[s,t] = \tilde{A}_H[s,t] + \tilde{\Delta}_\sigma^k[s,t]$, where $\tilde{\Delta}_\sigma$ can be constructed from $\tilde{A}_{H'}$. We can write $\tilde{\Delta}_\sigma[s,t] = \Delta_\sigma^k[s,t] - E'[s,t]$, where $E'$ is an error matrix consisting of polynomials of degree at most $d$. We will now show that the coefficients of these polynomials are upper-bounded by $1/2^b$.
	First observe that every entry of $\Delta_\sigma$ is a degree $d$ polynomial with coefficients at most $1/2$. We will bound the term corresponding to $\Delta_\sigma^k$ and the remainder separately. Since each entry of $\Delta_\sigma$ is at most $1/2$, we can bound the first term by $\tfrac{k}{2^k2^b}$. The remainder of the sum can be upper bounded by $\tfrac{k2^k}{2^22^{2b}}$. Therefore, each coefficient of the polynomials of this matrix is bounded by $\tfrac{k}{2^k2^b} + \tfrac{k2^k}{2^22^{2b}} \leq 1/2^b$.
	
	Therefore, we can write $\tilde{A}_{H'}[s,t] = \tilde{A}_H[s,t] + \tilde{\Delta}^k_\sigma[s,t] = A_H[s,t] + \Delta_\sigma^k[s,t] - E''[s,t]$ where $E''$ is an error matrix with each entry bounded by $1/2^{b-1}$. 
\end{proof}

\subsection{From powering to small determinants}
We first need to reduce the exponent from 
logarithmic to almost logarithmic. The following lemma in fact reduces 
it from polylogarithmic to almost logarithmic.
\begin{lemma}\label{lem:CayleyHamilton}
$\MatPow(l,d,b,k)$ $\ACz$-reduces to the conjunction of the following:\\
$\MatPow(l, 0, lb, l)$,
$\DetPoly(l,1,lb + 2k^2d)$,
$\Interp(dk, kl^2b+k^3ld)$ and $\Div(k,l,l^2b + 2k^2ld)$
\end{lemma}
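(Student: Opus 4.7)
The plan is to invoke the Cayley--Hamilton theorem: since $M := xA$ is an $l \times l$ matrix, its characteristic polynomial $p_M(y) = \det(yI_l - M)$ has degree $l$ in $y$, and $p_M(M) = 0$. Hence, writing $g(y) = \sum_{i=0}^k y^i$ and performing polynomial division $g(y) = q(y)\,p_M(y) + r(y)$ with $\deg_y r < l$, one obtains
\[
\sum_{i=0}^k M^i \;=\; g(M) \;=\; r(M),
\]
a $\Q[x]$-linear combination of only the $l$ smallest powers of $M$. To keep all the subproblems univariate in a scalar (as the stated subproblems are), I would execute this identity pointwise in $x$ and recombine by interpolation.

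Concretely, pick $dk+1$ distinct integer evaluation points $x_1,\ldots,x_{dk+1}$ and do the following in parallel for each $j$. First, evaluate $A(x_j)$ entrywise and set $M_j := x_j\,A(x_j)$, an $l\times l$ scalar matrix over $\Q$; this is routine $\ACz$ arithmetic. Second, compute $p_j(y) = \det(yI_l - M_j)$: here the input is an $l\times l$ matrix of degree-$1$ polynomials in $y$ with rational coefficients, a single call to $\DetPoly(l,1,lb+2k^2d)$. Third, divide the monic $g(y)$ of degree $k$ by the monic $p_j(y)$ of degree $l$ by one call to $\Div(k,l,l^2b+2k^2ld)$, and subtract $q_j(y)\,p_j(y)$ from $g(y)$ to obtain the remainder $r_j(y)$ of degree $<l$. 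Fourth, apply $\MatPow(l,0,lb,l)$ to the scalar matrix $M_j$; its output $\sum_{i=0}^l (z M_j)^i$ in a fresh variable $z$ exposes $M_j^0,\ldots,M_j^l$ as the coefficients of the powers of $z$. Fifth, assemble $S_j := \sum_{i=0}^{l-1} r_{j,i}\,M_j^i$, which by Cayley--Hamilton equals $\sum_{i=0}^k M_j^i$, namely the value at $x = x_j$ of the desired answer. Finally, each entry of $\sum_{i=0}^k (xA)^i$ is a polynomial in $x$ of degree at most $k(d+1) = O(dk)$, so a single entrywise call to $\Interp(dk, kl^2b+k^3ld)$ on the family $\{S_j\}_j$ recovers the answer. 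All the glue steps---polynomial evaluation at a rational point, reading off coefficients of $z$, forming a linear combination of $l$ scalar matrices, and dispatching the $dk+1$ independent slices in parallel---are constant-depth and so the overall reduction is in $\ACz$.

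The main obstacle is not the structure of the reduction, which is essentially forced by Cayley--Hamilton, but the detailed bookkeeping needed to show that the stated bit-length bounds $lb+2k^2d$, $l^2b+2k^2ld$ and $kl^2b+k^3ld$ are indeed respected. Each phase amplifies bit length: evaluating a degree-$d$ polynomial with $b$-bit rational coefficients at an integer of size $O(dk)$ yields entries of roughly $bd + d\log(dk)$ bits, and multiplying by $x_j$ contributes another $\log(dk)$; the $l\times l$ determinant for $p_j$ adds a factor of roughly $l$ more (via a Hadamard-type bound after clearing denominators); polynomial division of a degree-$k$ numerator by a degree-$l$ monic divisor can amplify coefficient sizes by a factor depending on the quotient degree $k-l$; forming $S_j$ aggregates $l$ products of the $r_{j,i}$ with entries of $M_j^i$ (themselves at most $l$-fold products of $M_j$-entries); and Lagrange interpolation across $dk+1$ nodes compounds this further. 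The somewhat opaque constants $2k^2d$, $2k^2ld$ and $k^3ld$ in the stated parameters encode exactly this accumulation, and the remaining work of the proof is to verify each of these bounds in turn.
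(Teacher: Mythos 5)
Your reduction has the same skeleton as the paper's: evaluate the polynomial matrix at $dk+1$ points, obtain the characteristic polynomial of each evaluated matrix via $\DetPoly$, divide by it using $\Div$, invoke Cayley--Hamilton to replace the degree-$k$ power (or power sum) by a remainder of degree $<l$, evaluate that remainder using $\MatPow(l,0,lb,l)$, and recombine by $\Interp$. Your observation that dividing $g(y)=\sum_{i=0}^k y^i$ directly handles the power \emph{sum}, and that the output of $\MatPow(l,0,lb,l)$ in a fresh variable exposes the individual powers $M_j^0,\ldots,M_j^{l-1}$ as coefficients, is a clean way to phrase what the paper does somewhat more loosely.

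There is, however, one concrete flaw: you choose \emph{integer} evaluation points. The subproblems you reduce to carry magnitude preconditions, not just bit-length ones: $\MatPow(l,0,lb,l)$ requires its (constant) entries to be bounded by $(3l)^{-1}$, the determinant problems are only shown to be solvable (and numerically stable, cf.\ the error lemma for $\Det$) when entries are bounded by $v<1$, and the interpolation lemma that guarantees no loss of precision is proved specifically for nodes of the form $z_i=i/(3d)^2$, where the Vandermonde inverse has entries of magnitude less than $1$. With integer points of size up to $dk$, the evaluated entries $M_j$ can exceed $1$, the characteristic-polynomial coefficients are no longer bounded by $1$ (so they are not $b$-bit rationals in the paper's sense), and the Vandermonde inverse blows up, so the stated parameters and the downstream error analysis fail. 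The paper instead takes $x_i=i/(3dk)^2$, which keeps every evaluated entry below $(3l)^{-1}$ and makes all the listed parameter bounds and the lossless-interpolation claim go through. Relatedly, the ``bookkeeping'' you defer to the end is not an afterthought here: verifying the bounds $lb+2k^2d$, $l^2b+2k^2ld$, $kl^2b+k^3ld$ \emph{is} most of the content of the lemma beyond the (standard) Cayley--Hamilton trick, and it only works with the small rational evaluation points.
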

We use a trick (see e.g. \cite{ABD,HealyViola}; notice that the 
treatment is similar but not identical to that in \cite{ABD}
because there we had to power only constant sized matrices)
to reduce large exponents to exponents bounded by the size
of the matrix for any matrix powering problem via the Cayley-Hamilton theorem
(see e.g. Theorem~4, Section 6.3 in Hoffman-Kunze\cite{hoffmanlinear}).
\begin{proof}
Given a matrix 
$M \in \mathbb{Q}^{l \times l}[x]$, let $M_i$ be the value of the polynomial
matrix $M$ with rationals $x_i$ substituted instead of $x$, for 
$i \in \{0,\ldots,dk\}$.
Here $x_0,\ldots,x_{dk}$
are $dk+1$ distinct, sufficiently small
rationals (say $x_i = \frac{i}{(3dk)^2}$).
Let $\chi_{M_i}(z)$ denote its characteristic polynomial $det(zI - M_i)$. We
 write $z^k = q_i(z)\chi_{M_i}(z) + r_i(z)$ for unique polynomials $q_i,r_i$
such that $deg(r_i) < deg(\chi_{M_i}) = l$. Now, 
${M_i}^k = q_i(M_i)\chi_{M_i}(M_i) + r_i(M_i) = r_i(M_i)$. 
Here, the last equality follows from the Cayley-Hamilton theorem
that asserts that $\chi_{M_i}(M_i) = 0_l$. But $r_i(z)$ is a polynomial 
of degree strictly less than the dimension of $M_i$ and each monomial in 
this involves powering $M_i$ to an exponent bounded by $l-1$.
Finally computing $M^k$ reduces to interpolating each entry from the
corresponding entries of $M_i^k$.

Now we analyse this algorithm.
First we evaluate the matrix at $dk+1$ points $x_0, x_1, \ldots, x_{dk}$
where $x_i = \frac{i}{(3dk)^2}$. This yields a matrix $M_i$
whose entries are bounded by $\frac{1}{3k} + \sum_{j=1}^{dk}{i^j(3dk)^{-2j}} < \frac{1}{3k} + \sum_{j=1}^d{(3k)^{-j}} < \frac{1}{3k} + \frac{1}{3k-1} < k^{-1} < {(3l)}^{-1}$ in 
magnitude. 
 
We then compute the characteristic
polynomial of $M_i$.  Notice that the value of $det(zI - M_i)$ is 
a monic polynomial with coefficient of $z^{l-j}$
bounded by $j!{{l}\choose{j}} {(3l)}^{-j} < 1$ for $j > 0$.
Suppose, teh denominator of an entry of $M$ is bounded by $\beta$.
Then The denominator of $M_i$ is bounded by $\beta(3dk)^{2dk}$.
Moreover, the denominator of this coefficient is further bloated 
to at most $\beta^l (3dk)^{2ldk} \leq 2^{lb + 2k^2d}$ (where we use that 
$l\log{3dk} \approx k$).
 Thus this corresponds to an instance of $\DetPoly (l,1, lb + 2k^2d)$.

In the next step, we divide $z^k$ by the characteristic polynomial
of $M_i$, $\chi_{M_i}(z)$. This corresponds to an instance of
$\Div(k,l, l^2b + 2k^2ld)$.

For computing the evaluation of the remainder polynomial on an
$M_i$, we need to
power an $l \times l$ matrix $M_i$ with $lb$-bit rational entries
to exponents bounded by at most $l-1$.
This can be accomplished by
$\MatPow(l,0, lb, l)$ by recalling that the each entry of $M_i$ is bounded
by ${(3l)}^{-1}$.

Finally, we obtain $M^k$ by interpolation. Every entry of $M^k$ is
a polynomial of degree at most $dk$. Every coefficient of
this polynomial is an $kb$-bit rational and moreover the 
evaluation on entries of $r_i(M_i^l)$ are given which are $kl^2b + 2k^3ld$-bit 
i.e. via $\Interp(dk,kl^2b + 2k^3ld)$.
\end{proof}

Next, we reduce almost logarithmic powers of almost logarithmic sized matrices 
to almost logarithmic sized determinants of polynomials.
\begin{lemma}\label{lem:powToDet}
$\MatPow(l,d,b,l)$ $\ACz$-reduces to the conjunctions of \\
$\DetPoly(l,1,lb)$,
$\Det(l+1, lb, (l+1)^{-1})$,
$\Interp(dl,lb)$
\end{lemma}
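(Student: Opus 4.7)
The plan is to use the resolvent identity $(I-zM)^{-1}=\sum_{i\ge 0} z^i M^i$ as a matrix-valued generating function for the powers of $M$.  Since only the first $l+1$ Taylor coefficients are needed, these can be read off by combining one polynomial-entry determinant (to obtain $\det(I-zM)$), a family of scalar bordered determinants (to evaluate $(I-zM)^{-1}$ entries at particular scalars), and interpolation.

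\emph{Outer layer.}  I would first reduce the polynomial-matrix problem to scalar-matrix computations.  Pick $dl+1$ distinct small rationals $\xi_0,\ldots,\xi_{dl}$ and set $M_j := A(\xi_j)$; choosing the $\xi_j$'s small enough, each entry of $M_j$ satisfies $|(M_j)_{pq}| < (l+1)^{-1}$.  Once every scalar power $M_j^i$ ($i=0,\ldots,l$) is in hand, the matrix polynomial $A(x)^i$ has entry-wise degree at most $di \leq dl$ and is recovered entry-wise by interpolation from the $dl+1$ rational values $(M_j^i)_{pq}$, which invokes $\Interp(dl,lb)$.  The output $\sum_{i=0}^l x^i A(x)^i$ is then assembled by coefficient addition, trivially in $\ACz$.

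\emph{Inner layer.}  Fix $M := M_j$.  Compute $\chi(z) := \det(I-zM)$ as an explicit polynomial of degree $l$ with $\chi(0)=1$ via one $\DetPoly(l,1,lb)$ call.  Next, for $l$ distinct small rationals $\zeta_0,\ldots,\zeta_{l-1}$ and each $(p,q)\in [l]^2$, form the $(l+1)\times(l+1)$ bordered matrix
\[
B_{p,q,k} \,:=\, \begin{pmatrix} I - \zeta_k M & -e_q \\ e_p^T & 0 \end{pmatrix};
\]
by Schur's complement, $\det(B_{p,q,k}) = \chi(\zeta_k) \cdot \bigl((I-\zeta_k M)^{-1}\bigr)_{pq} = (\operatorname{adj}(I-\zeta_k M))_{pq}$.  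A uniform row/column rescaling of $B_{p,q,k}$ places every entry within magnitude $(l+1)^{-1}$, so one $\Det(l+1,lb,(l+1)^{-1})$ call per triple yields the rescaled determinant, which is corrected afterwards by a known scalar factor.  Since $(\operatorname{adj}(I-zM))_{pq}$ has degree at most $l-1$ in $z$, the $l$ values $\det(B_{p,q,\cdot})$ determine it uniquely by interpolation (a low-degree interpolation subsumed by $\Interp(dl,lb)$ or done in $\ACz$ directly).

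\emph{Extraction.}  With $\chi(z)$ and $\operatorname{adj}(I-zM)$ both explicit, the identity $(\operatorname{adj}(I-zM))_{pq} = \chi(z)\sum_{i\ge 0} z^i (M^i)_{pq}$ yields, on comparing $z^j$ coefficients for $j<l$, a unit lower-triangular Toeplitz system in $(M^0)_{pq},\ldots,(M^{l-1})_{pq}$; the $z^l$ coefficient contributes the Cayley--Hamilton expression for $(M^l)_{pq}$.  The main obstacle is this extraction step, which naively would require polynomial long division or Newton-style series inversion.  I circumvent it by observing that the system has size $l$ with polylog-bit entries --- total input length polylogarithmic in $n$ --- and triangular solving is a logspace task, so Lemma~\ref{lem:borrowed} places the extraction in $\ACz$.
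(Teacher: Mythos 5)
Your overall architecture is the same as the paper's: evaluate $A(x)$ at $dl+1$ small rationals, use the resolvent $(I-zM_j)^{-1}=\sum_i z^i M_j^i$ together with Cramer's rule to express the generating function of the powers as a ratio of determinants, exploit $\det(I-zM_j)(0)=1$ to get a unit lower-triangular Toeplitz system for the coefficients, and interpolate back. Your bordered-determinant detour for the adjugate entries is a workable variant of the paper's more direct route (the paper just computes the $(t,s)$-minors of $I-zM_j$ as further $\DetPoly(l,1,lb)$ instances), though it adds a rescaling step whose effect on the error bound of Lemma~\ref{lem:app:errorDet} you would still need to track.

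The genuine gap is in your extraction step. Solving a unit lower-triangular system over $\Q$ is \emph{not} a logspace task in any known sense: writing $M=I-N$ with $N$ strictly lower triangular, $M^{-1}=\sum_{i\le l}N^i$, and computing such inverses already captures counting $s$--$t$ paths in a DAG, a $\#\Log$-complete problem that is reducible to (and believed as hard as) the determinant, not known to lie in $\Log$. Consequently Lemma~\ref{lem:borrowed} does not apply, and this is precisely the bottleneck the paper flags at the start of Section~4: polylogarithmic-length inputs of $\Log$ (or $\mathsf{NL}$) languages can be decided in $\ACz$, but no such collapse is known for determinant-type problems --- which is exactly why $l$ is capped at $O(\log n/\log\log n)$ rather than $\log^{O(1)}n$ throughout. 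The repair is immediate and is what the paper does: since the Toeplitz matrix has determinant $1$, each unknown $p^{(s,t,j)}_i$ is by Cramer's rule a single $(l+1)\times(l+1)$ determinant over $lb$-bit rationals, i.e.\ one more parallel call to the $\Det(l+1,lb,(l+1)^{-1})$ oracle you already invoke, which Lemma~\ref{lem:rationalDMVZ} places in $\ACz$ for almost-logarithmic dimension. With that substitution your argument goes through.
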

\begin{proof}
Let $A^{(j)}$ be the univariate polynomial matrix $A = A(x)$ evaluated at 
point $x = x_j$ where
$x_0,\ldots,x_{dl}$ are $dl+1$ distinct rationals say 
$x_j = \frac{j}{(3dl)^2}$.
Consider the infinite power series $p^{(s,t,j)}(z) = (I - zA^{(j)})^{-1}[s,t]$.
$(I - zA^{(j)})^{-1} = \sum_{i=0}^{\infty}{z^i(A^{(j)})^i}$. 
Thus $p^{(s,t,j)}(z)$
is the generating function of $(A^{(j)})^i[s,t]$ parameterised on $i$. 
$p^{(s,t,j)}(z)$ can be also be written, by Cramer's rule (See, for example,
Section 5.4, p. 161, Hoffman-Kunze \cite{hoffmanlinear}). 
as the ratio of two determinants -- the numerator being the determinant of
the $(t,s)$-th minor of $(I - zA^{(j)})$, say $D^{(s,t,j)}(z)$
 and the denominator being the determinant $D^{(j)}(z)$ of $I - zA^{(j)}$.
Thus $p^{(s,t,j)}(z) = \frac{D^{(s,t,j)}(z)}{D^{(j)}(z)}$. In other words,
$D^{(s,t,j)}(z) = p^{(s,t,j)}(z) D^{(j)}(z)$. Now let us compare the 
coefficients of
$z^i$ on both sides\footnote{Here we use the convention that $a_i$ denotes
the coefficient of $z^i$ in $a(z)$, where $a(z)$ is a power series (or  
in particular, a polynomial).}: 
\[
D^{(s,t,j)}_i = \sum_{k=0}^i{p^{(s,t,j)}_k D^{(j)}_{i-k}}
\]
Letting, $i$ run from $0$ to degree of $D^{(j)}$ which is $l$ = dimension of $A$,
we get $l+1$ equations in the $l+1$ unknowns $p^{(s,t,j)}_i$ for 
$i \in \{0,\ldots,l\}, j \in \{0,\ldots,d\}$. Equivalently, this can be 
written as the matrix
equation: $M^{(j)}\pi = d^{(j)}$ where $M^{(j)}$ is an $(l+1)\times (l+1)$ matrix with 
entries $M^{(j)}_{ik} = D^{(j)}_{i-k}$, 
for $0 \leq k \leq i \leq l$ and zero for all other values of $i,k$ lying in
$\{0,\ldots,l\}$. Similarly, $d^{(j)}$ is a vector with entries
$d^{(j)}_k = D^{(s,t,j)}_k$ and $\pi$ the vector with $l+1$ unknowns 
$\pi_k = p^{(s,t,j)}_k$ 
again for $i,k \in \{0,\ldots,l\}$. Notice that specifically in this argument,
indices of matrices/vectors start at $0$ instead of $1$ for convenience.

Next, we show that the matrix $M$ is invertible.
We make the trivial but crucial observation:
\begin{observation}\label{obs:constTerm}
The constant term in $D^{(j)}(z) = det(I - zA^{(j)})$ is $1$.
\end{observation}
This implies that:
\begin{proposition}
$M^{(j)}$ is a lower triangular matrix with all principal diagonal entries
 equal to $1$ hence has determinant $1$.
\end{proposition}
Next we can interpolate the values of $A^i[s,t]$ from the values of 
$(A^{(j)})^i[s,t] = [z^i]p^{(s,t,j)}$ for 
$i \in \{0,\ldots,l\}, j \in \{0,\ldots,\frac{dl}{(3dl)^2 }\}$.`

Now we analyse this algorithm.
First, we evaluate the matrix at $dl+1$ distinct rationals.
Each entry of the $j$-th matrix is now bounded by 
$\frac{1}{3l} + \sum_{i=1}^{dl}{j^i(3dl)^{-2i}}< {(l+1)}^{-1}$. 
We then compute determinant
of the matrix $I - zA^{(j)}$ where $A \in \Q^{l \times l}$. The number of bits 
in the denominators are less than $2^{bl}$ and moroever the values of 
coefficient of $z^j$ is less than $j!{l\choose{j}}{(l+1)}^{-j} < 1$ if $j > 0$.
 Thus the coefficents are $lb$-bit rationals.
Hence, computing the determinant of $I - zA^{(j)}$ corresponds to an instance of
$\DetPoly(l,1,lb)$.

The next step is to compute the inverse of a $(l+1) \times (l+1)$
matrix $M^{(j)}$ above.
 The $(a,b)$ entry of $M^{(j)}$ is either zero
or equals the coefficient of $z^{a-b}$ in a cofactor of
$(I -zA^{(j)})$. 
By a logic similar to that used in the proof of 
Lemma~\ref{lem:CayleyHamilton} these are all $lb$-bit rationals.
Further, a crude upper bound on their values is 
$(l+1)^{-1}$ as above.
Thus we get an instance of $\Det(l+1,lb,(l+1)^{-1})$.

Finally, we find the matrix $A^i$ from the values of $(A^{(j)})^i$.
This corresponds to an instance of 
$\Interp (dl, lb)$.
\end{proof}

\subsection{Working with small determinants}
First we reduce the problem of computing almost logarithmic determinants of 
small polynomials to computing almost logarithmic determinants over small
rationals.

\begin{lemma}\label{lem:detInterp}
$\DetPoly(l,d,b)$ $\ACz$-reduces to the conjunction of
\\ $\Interp(dl,lb),\Det(l,lb,(l+1)^{-1})$.
\end{lemma}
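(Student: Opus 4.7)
The plan is to reduce polynomial-matrix determinant to a batch of scalar determinants followed by a single interpolation, exploiting the basic observation that for $M \in \Q^{l\times l}[x]$ with $\deg(M_{ij}) \leq d$, the determinant $\det(M(x))$ is itself a univariate polynomial of degree at most $dl$, and hence uniquely determined by its values at any $dl+1$ distinct rational points.

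First, I would fix $dl+1$ distinct evaluation points $x_j = j/(3dl)^2$ for $j \in \{0,1,\ldots,dl\}$, which are small in magnitude and of bit length $O(\log(dl))$---comfortably within the $lb$-bit budget required by $\Interp(dl,lb)$. For each $j$, the scalar matrix $M(x_j)$ is obtained by evaluating each polynomial entry at $x_j$: each entry is a sum of $d+1$ products of a $b$-bit rational coefficient with a hard-wired power of $x_j$. A straightforward bookkeeping calculation gives $|M(x_j)_{ab}| \leq \sum_{i=0}^{d} |x_j|^i < 2$ and a bit length of $O(lb)$ on each entry, using that $d \leq l$ in the applications of this lemma (in both places where it is invoked, $d = 1$).

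Since $\DetPoly$ places no smallness constraint on the constant terms of the input polynomials, the entries of $M(x_j)$ will in general not meet the magnitude bound $(l+1)^{-1}$ required by $\Det$. To fix this, I would scale, setting $N_j = M(x_j)/v$ for $v = 2(l+1)$; the entries of $N_j$ then have magnitude at most $(l+1)^{-1}$, the bit length is still $O(lb)$, and one call to $\Det(l, lb, (l+1)^{-1})$ returns $\det(N_j)$. The true evaluation is then recovered as $\det(M(x_j)) = v^l \cdot \det(N_j)$ by multiplying by the hard-wired constant $(2(l+1))^l$. Feeding the pairs $\{(x_j, \det(M(x_j)))\}_{j=0}^{dl}$ into $\Interp(dl, lb)$ then yields the coefficients of $\det(M(x))$; the values $\det(M(x_j))$ themselves may have large bit length, but the definition of $\Interp$ only constrains the bit size of the evaluation \emph{points}, which are $O(\log(dl))$-bit.

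To see that the evaluation and scaling steps that glue the reduction together are in $\ACz$, observe that computing each entry of $M(x_j)$ amounts to an iterated sum of polylogarithmically many polylogarithmic-bit products. The overall input to this arithmetic is of polylogarithmic length and the required function is computable in $\Log$, so by Lemma~\ref{lem:borrowed} it lies in $\ACz$; the same applies to the post-multiplication by $v^l$. The principal obstacle is purely parameter bookkeeping---verifying that bit lengths of intermediate quantities stay within $lb$ and magnitudes within $(l+1)^{-1}$---which is routine once one pins down $d \leq l$, the regime in which this lemma is used.
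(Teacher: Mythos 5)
Your proposal is correct and takes essentially the same route as the paper: evaluate the polynomial entries at the $dl+1$ small rational points $x_j = j/(3dl)^2$, compute each resulting scalar determinant via $\Det(l,lb,(l+1)^{-1})$, and recover the degree at most $dl$ determinant polynomial by a single call to $\Interp(dl,lb)$. The only divergence is your explicit rescaling by $v = 2(l+1)$: the paper instead asserts the magnitude bound $(l+1)^{-1}$ directly ``as in the proof of Lemma~\ref{lem:CayleyHamilton}'', implicitly relying on the constant terms of the entries being small in the instances where $\DetPoly$ is actually invoked, so your variant is if anything the more careful one for arbitrary inputs, at the price of the factor $v^{l}$, which costs an extra $O(l\log l)$ bits in the downstream precision accounting.
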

\begin{proof}
Here, we need to compute the determinant of an $l \times l$
matrix with each entry a degree $\leq d$ polynomial with $b$ bit
coefficients. Clearly, the determinant is a degree at most $ld$
polynomial. So, we plug in
$ld + 1$ different values $x_0, x_1, \ldots, x_{ld}$, 
where $x_i = \frac{i}{(3ld)^2}$ into the
determinant polynomial. This
yields a determinant with entries bounded by 
$(l+1)^{-1}$ in magnitude as in the proof of Lemma~\ref{lem:CayleyHamilton}.

The next step is to interpolate a degree at most $ld$ polynomial.
The coefficient of $x^m$ where $m \leq ld$ is bounded by
$1$ as in the previous lemmas, and the number of bits is at most
$lb$ as well.

\end{proof}

Then we show how to compute small rational determinants by using
Chinese Remaindering and the computation of determinants over small fields.
\begin{lemma}\label{lem:rationalDMVZ}
For every $c > 0$,
$\Det(\frac{\log{n}}{\log{\log{n}}}, \log^c{n},v) \in \ACz$
\end{lemma}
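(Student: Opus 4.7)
The plan is to reduce the rational determinant first to an integer determinant by clearing denominators, then to integer determinants modulo small primes via Chinese Remaindering, and finally to invoke the $\ACz$ algorithm of \cite{DMVZ18} for integer determinants modulo a prime of almost-logarithmic-sized matrices. Given $A \in \Q^{l \times l}$ with $A_{ij} = \alpha_{ij}/\beta_{ij}$, $|\alpha_{ij}| < \beta_{ij} \leq 2^b$, I would form $A' = DA$ where $D$ is diagonal with $D_{ii} = \prod_j \beta_{ij}$; then $A'$ is integer-valued with entries bounded in magnitude by $2^{lb}$, and $\det(A) = \det(A')/\det(D)$. The Hadamard bound gives $|\det(A')| \leq l! \, 2^{l^2 b} = 2^B$ where $B = O(l^2 b) = \log^{O(1)} n$, so the target rational has polylogarithmic bit-length.

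Next, I would hard-wire into the $\ACz$ circuit a family of $k = O(B/\log n) = \log^{O(1)} n$ distinct primes $p_1,\ldots,p_k$, each of $\Theta(\log n)$ bits, with $\prod_i p_i > 2^{2B+1}$. Reducing $A'$ entry-wise modulo each $p_i$ is $\ACz$-computable via Lemma~\ref{lem:borrowed} (integer reduction mod a prime is in $\Log$, and each entry has polylog size). Compute $d_i = \det(A') \bmod p_i$ by invoking the result of \cite{DMVZ18} that $\Det_{p_i}(l) \in \ACz$ for $l = O(\log n / \log\log n)$. Reconstruct $\det(A')$ from $(d_1,\ldots,d_k)$ by CRT; CRT reconstruction from polylog-many polylog-bit residues is a $\Log$ computation on a polylogarithmic input, hence in $\ACz$ by Lemma~\ref{lem:borrowed}. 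In parallel, compute $\det(D) = \prod_{i,j} \beta_{ij}$, an iterated product of $l^2$ many $b$-bit integers; as iterated multiplication is in $\Log$ and the total input is polylogarithmic, this too is $\ACz$-computable by the same lemma. The output is then the rational pair $(\det(A'), \det(D))$ encoding $\det(A)$.

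The main obstacle is the appeal to $\Det_p(l) \in \ACz$ for $l$ almost logarithmic. Naively the permutation expansion has $l! = n^{O(1)}$ terms, but each term requires an iterated product of $l$ residues mod $p$, which is not in $\ACz$ by standard lower bounds; and Lemma~\ref{lem:borrowed} alone does not suffice either, since determinant modulo a prime is believed to lie strictly above $\Log$. The contribution of \cite{DMVZ18} is precisely a bespoke $\ACz$ procedure for matrices of this size over small finite fields, and this is exactly the subroutine that forces the batch size throughout our dynamic algorithm to be capped at $O(\log n / \log\log n)$.
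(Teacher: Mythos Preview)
Your proposal is correct and follows essentially the same route as the paper: clear denominators to obtain an integer matrix, compute its determinant modulo a collection of small primes via Lemma~\ref{lem:DMVZ}, and reconstruct by CRT using Lemma~\ref{lem:borrowed}. The only cosmetic differences are that you clear denominators row-by-row (the paper simply scales by $2^b$, which is slightly sloppy) and you take primes of magnitude $n^{\Theta(1)}$ rather than polylogarithmic, but both choices are within the scope of Lemma~\ref{lem:DMVZ} and Lemma~\ref{lem:borrowed}.
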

\begin{proof}
Let $b = \log^c n$.
The basic idea is to use the Chinese Remainder Theorem, CRT
 (see e.g. \cite{HAB})
 with prime moduli that are of magnitude at most $\log^{c+1}{n}$ to obtain
the determinant of $2^bA$ which is an integer matrix (since the entries are
$b$-bit rationals of magnitude at most $2^b$). For $n^{O(1)}$ primes this
problem is solvable in $\TCz$ by \cite{HAB} and hence in $\Log$. Thus, by
Lemma~\ref{lem:borrowed} it is in $\ACz$ for primes of magnitude 
polylogarithmic. We of course need to compute the determinants modulo the
small primes for which we use Lemma~\ref{lem:DMVZ}.
\end{proof}

Notice that in the definition of $\Det$, the third argument,
that is $v$, is needed for error analysis which is done in the  following lemma. For our purposes, $v \leq \frac{1}{l+1}$. 

\begin{lemma}\label{lem:app:errorDet}
Let $A$ be a $l \times l$ matrix with entries that are $b$-bit rationals
smaller than $l^{-1}$.
Let $\tilde{A}$ be a $l\times l$ matrix each of whose entries is a 
$B$-bit approximation to the corresponding entry of $A$. Then 
assuming $B = \Omega(l^2)$,
$det(\tilde{A})$ is a $B$-bit approximation to $det(A)$.
\end{lemma}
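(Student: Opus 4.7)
The plan is to view $\tilde{A}$ as a perturbation of $A$ and to expand the Leibniz formula term-by-term. Write $\tilde{A} = A + E$ where each entry of $E$ satisfies $|E_{ij}| \leq 2^{-B}$ by the hypothesis that $\tilde{A}$ is a $B$-bit approximation. Then
\[
\det(\tilde{A}) - \det(A) = \sum_{\sigma \in S_l} \mathrm{sgn}(\sigma)\Bigl[\prod_{i=1}^{l}(A_{i\sigma(i)} + E_{i\sigma(i)}) - \prod_{i=1}^{l} A_{i\sigma(i)}\Bigr],
\]
so it suffices to bound, for each fixed permutation $\sigma$, the perturbation of a single product of $l$ numbers whose magnitudes are at most $l^{-1}$.

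For a single permutation, expanding $\prod_i(A_{i\sigma(i)} + E_{i\sigma(i)})$ yields $2^l$ terms indexed by subsets $S \subseteq \{1,\dots,l\}$ (meaning the index set on which the $E$-factor is chosen); cancelling the $S=\emptyset$ term with $\prod_i A_{i\sigma(i)}$ and taking absolute values gives
\[
\Bigl|\prod_{i}(A_{i\sigma(i)} + E_{i\sigma(i)}) - \prod_{i} A_{i\sigma(i)}\Bigr| \leq \sum_{k=1}^{l}\binom{l}{k}(2^{-B})^k (l^{-1})^{l-k} = (l^{-1} + 2^{-B})^l - l^{-l}.
\]
Summing over all $l!$ permutations with the triangle inequality and factoring out $l^{-l}$ gives
\[
|\det(\tilde{A}) - \det(A)| \leq \frac{l!}{l^l}\,\bigl[(1 + l\cdot 2^{-B})^l - 1\bigr].
\]

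The main (and only nontrivial) step is to show this bound is at most $2^{-B}$ under the assumption $B = \Omega(l^2)$. I would use $(1+x)^l - 1 \leq 2lx$ whenever $lx \leq 1$: since $B = \Omega(l^2)$ and in particular $l\cdot 2^{-B} \leq 1/l$ for a sufficiently large hidden constant, the bracketed factor is at most $2l^2 \cdot 2^{-B}$, so
\[
|\det(\tilde{A}) - \det(A)| \leq \frac{l!}{l^l}\cdot 2l^2 \cdot 2^{-B}.
\]
Stirling's approximation gives $l!/l^l = O(\sqrt{l}\,e^{-l})$, and this super-exponential decay easily beats the polynomial factor $2l^2$ once $l$ is past a small constant; for the finitely many smaller $l$, the assumption $B = \Omega(l^2)$ absorbs any constant multiplicative overhead. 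Therefore $|\det(\tilde{A}) - \det(A)| \leq 2^{-B}$, which is exactly the statement that $\det(\tilde{A})$ is a $B$-bit approximation of $\det(A)$.

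The only place where any care is needed is precisely this last combinatorial accounting: a naive bound $\|E\|_\max$ times $l!$ times $l^l$-factors would be catastrophic, and it is the entrywise bound $|A_{ij}| < l^{-1}$ combined with $l!/l^l$ from Stirling that saves us. The condition $B = \Omega(l^2)$ is comfortably more than enough; it is present in order to ensure both that $(1+l\cdot 2^{-B})^l$ is well-approximated by its linearisation and that the polynomial overhead $2l^2$ is dominated.
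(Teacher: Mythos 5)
Your proof is correct and follows essentially the same route as the paper's: expand $\det(\tilde{A})-\det(A)$ via the Leibniz formula, bound each permutation's perturbation using the entrywise bound $l^{-1}$ so that the $l!$ permutations are offset by the factor $l^{-l}$, and use $B=\Omega(l^2)$ to control the higher-order terms in $E$. You are in fact more explicit than the paper, which simply asserts that monomials containing more than one $E$-factor can be neglected, whereas you bound them via the binomial expansion.
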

\begin{proof}
Difference between corresponding monomials in the two determinants is
easily seen to be upper bounded by $2^{-B}l^{-(l-1)}$ in magnitude.
Notice that the assumption
$B = \Omega(l^2)$ tacitly implies that we can neglect all monomials that
that contain more than one term of magnitude $2^{-B}$ and just need to 
consider the terms that consist of exactly one $2^{-B}$ and the rest being
the actual entries.
Hence the (signed) sum over all monomial (differences) is upper bounded by 
$2^{-B}$ in magnitude.
\end{proof}

\begin{lemma}(Paraphrased from Theorem~8 \cite{DMVZ18})\label{lem:DMVZ}
If $p \in O(n^c)$ is a prime then,
$\Det_p(\frac{\log{n}}{\log{\log{n}}}) \in \ACz$.
\end{lemma}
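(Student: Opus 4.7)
The plan is to parallel Lemma~\ref{lem:borrowed}'s route to $\ACz$ via \emph{subexponential bounded-depth circuits}, but with $\Log$ replaced by $\mathsf{Mod}_p\mathsf{L}$. The key structural observation is that the entire determinant-mod-$p$ problem has polylogarithmic input length, namely $l^2 \log p = O(\log^3 n / \log^2 \log n)$ bits, so as in Lemma~\ref{lem:borrowed} a padding argument will convert a suitable subexponential circuit into a polynomial-size $\ACz$ circuit.

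Concretely, I would start from the Leibniz expansion
\[
\det(A) \;\equiv\; \sum_{\sigma \in S_l} \operatorname{sgn}(\sigma) \prod_{i=1}^l A_{i,\sigma(i)} \pmod p,
\]
and note via Stirling that $\log(l!) = l\log l - \Theta(l) = \log n - o(\log n)$, so there are only $n^{1-o(1)}$ permutations; hence all of them can be enumerated in parallel by an $\ACz$ circuit of polynomial size, one subcircuit per $\sigma$. For each fixed $\sigma$ the sign is a function of the $O(\log n)$ bits encoding $\sigma$ and is therefore trivially $\ACz$-computable, while the inner product $\prod_i A_{i,\sigma(i)} \bmod p$ is an iterated multiplication of $l = O(\log n/\log\log n)$ numbers of $O(\log n)$ bits each---a total input of length $O(\log^2 n/\log\log n)$, polylogarithmic in $n$. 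Since iterated multiplication modulo $p$ is in $\TCz \subseteq \Log$, Lemma~\ref{lem:borrowed} realizes each per-permutation computation by a polynomial-size $\ACz$ subcircuit.

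The hard part is the final step: summing the $l! = n^{1-o(1)}$ signed products modulo $p$. Polynomial-fan-in iterated addition, even mod $p$, is $\TCz$-hard, so we cannot simply combine the per-permutation outputs by a naive sum. The resolution, which is the substantive content of Theorem~8 of~\cite{DMVZ18}, is to view the problem ``is bit $j$ of $\det_p(A)$ equal to $1$?'' as a single language whose input is of polylogarithmic length and which lies in $\mathsf{Mod}_p\mathsf{L}$, and then to establish the analogue of Lemma~\ref{lem:borrowed} with $\Log$ replaced by $\mathsf{Mod}_p\mathsf{L}$. Following the template of Lemma~\ref{lem:undir-conn-subexp}, one exhibits for a $\mathsf{Mod}_p\mathsf{L}$-complete problem (for instance, counting $s$-$t$ paths modulo $p$ in a layered digraph) an $\AC$-circuit family of size $N^{N^{1/d}}$ and depth $O(d)$ by a Savitch-style halving recursion on path length, where the polynomial bound on $p$ lets each arithmetic operation modulo $p$ be implemented by a constant-depth $\ACz$ table-lookup subcircuit. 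On polylogarithmic input length this size is sublinear in $n$, which yields the desired polynomial-size $\ACz$ circuit; composing this with the Leibniz outer structure completes the argument. Establishing the $\mathsf{Mod}_p\mathsf{L}$ analogue of Lemma~\ref{lem:borrowed} is the main obstacle, since the standard Savitch-style construction that works so cleanly for undirected reachability must be adapted to preserve arithmetic mod $p$ inside constant-depth subcircuits without ever invoking a truly polynomial-fan-in summation in $\ACz$.
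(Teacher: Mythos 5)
You should first be aware that the paper contains no proof of this lemma: it is imported as a black box from Theorem~8 of \cite{DMVZ18} (the parenthetical ``Paraphrased from'' is the entire justification), so there is nothing internal to compare against. Judged as a standalone argument, your proposal has a genuine gap, and it sits exactly at the step you yourself flag as ``the main obstacle.'' The Leibniz outer structure contributes nothing in the end: as you correctly observe, even after each of the $l!=n^{1-o(1)}$ signed products is realized by a polynomial-size $\ACz$ subcircuit, combining them is a MOD-$p$ aggregation of polynomially many residues, which Razborov--Smolensky rules out for polynomial-size constant-depth circuits. Everything therefore rests on the asserted ``$\mathsf{Mod}_p\mathsf{L}$ analogue of Lemma~\ref{lem:borrowed},'' and the Savitch-style recursion you sketch for it does not go through. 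In Lemma~\ref{lem:undir-conn-subexp} the $M^{\Theta(M^{1/d})}$ midpoint sequences generated at each level are combined by a single unbounded fan-in OR, which is free in $\ACz$; in the counting-mod-$p$ version the same $M^{\Theta(M^{1/d})}$ subresults must be combined by a \emph{sum modulo $p$}. Even with $M=\mathrm{polylog}(n)$ this is a MOD-$p$ of $2^{\log^{\Omega(1)}n}$ many terms, which again has no polynomial-size constant-depth circuits; lowering the arity to make each level's aggregation feasible (e.g.\ halving, so that each level is a sum of only $M$ residues, which \emph{is} handled by Lemma~\ref{lem:borrowed}) forces $\Theta(\log\log n)$ levels and hence non-constant depth. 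Your observation that the polynomial bound on $p$ permits ``constant-depth table-lookup'' covers only bounded fan-in arithmetic such as multiplying two residues; it does not touch the unbounded fan-in summation that reappears at every level. The proposal is thus circular: the hard content of the cited theorem is deferred to an unproved claim containing the same difficulty.

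It is also worth noting why the most tempting elementary shortcut fails. The whole instance has polylogarithmic bit-length and the naive determinant computation (enumerate permutations, keep a running product and a running sum mod $p$) runs in deterministic space $O(l\log l+\log p)=O(\log n)$ --- but that is logarithmic in $n$, not in the polylogarithmic input length, so Lemma~\ref{lem:borrowed} does not apply: the configuration graph has $n^{\Theta(1)}$ nodes, and the reachability circuits of Lemma~\ref{lem:undir-conn-subexp} then have size $n^{n^{\Theta(1)/d}}$, superpolynomial for every constant $d$. Closing precisely this gap is the substance of Theorem~8 of \cite{DMVZ18}; in the context of this paper the correct move is simply to cite it, and any self-contained replacement would have to supply an idea that your sketch does not.
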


\subsection{The complexity of polynomial division}
We use a slight modification of the Kung-Sieveking algorithm as described in 
\cite{ABD,HealyViola}. The algorithm in \cite{ABD} worked over finite fields 
while here we apply it to divide polynomials of small heights and degrees over
rationals. The algorithm and its proof of correctness
follows Lemma~7 from \cite{ABD} in verbatim. 
We reproduce the relevant part for completeness
 (with minor emendments to accommodate for the characteristic):
\begin{lemma}\label{lem:KungSieveking}
Let  $g(x)$ of degree $n$ and $f(x)$ of degree $m$ be monic univariate
polynomials over $\mathbb{Q}[x]$, such that
$g(x) = q(x)f(x) + r(x)$
for some polynomials $q(x)$ of degree $(n - m)$ and $r(x)$ of degree $(m - 1)$.
Then, given the coefficients of $g$ and $f$, the coefficients of $r$ can
be computed in $\TCz$.
In other words $\Div(n,m,b) \in \TCz$ if $m < n$ and $b = n^{O(1)}$.
\end{lemma}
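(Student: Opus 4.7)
The plan is to implement the standard Kung--Sieveking division algorithm and show that each step fits into $\TCz$. The key conceptual move is to \emph{reverse} the polynomials: for a polynomial $p$ of degree $D$, write $\bar p(x) := x^D p(1/x)$. Starting from $g(x) = q(x)f(x) + r(x)$ with $\deg q = n-m$ and $\deg r \le m-1$, multiplying both sides by $x^n$ and substituting $x \mapsto 1/x$ yields the congruence
\[
\bar g(x) \equiv \bar q(x)\,\bar f(x) \pmod{x^{\,n-m+1}},
\]
because the contribution of $r$ is $x^{n-m+1}\bar r(x)$. Since $f$ is monic, $\bar f(0)=1$, so $\bar f$ is invertible as a formal power series, and $\bar q \equiv \bar g\cdot \bar f^{-1} \pmod{x^{\,n-m+1}}$. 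Reversing $\bar q$ recovers $q$, after which $r = g - qf$ is a single polynomial subtraction.

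The main technical ingredient is computing $\bar f^{-1} \bmod x^{\,n-m+1}$ in $\TCz$. I would write $\bar f(x) = 1 - h(x)$ with $h(0)=0$, so that as formal power series
\[
\bar f^{-1}(x) \equiv \sum_{i=0}^{n-m} h(x)^i \pmod{x^{\,n-m+1}}.
\]
Each coefficient of $h(x)^i$ is a sum of at most $\binom{n-m+i}{i}$ products of $i$ coefficients of $h$; in particular the whole sum reduces to an instance of iterated polynomial multiplication followed by iterated polynomial addition. Polynomial multiplication is componentwise a sum of products of rationals; iterated polynomial multiplication of $n^{O(1)}$ polynomials of degree $n^{O(1)}$ with $n^{O(1)}$-bit coefficients reduces via coefficient extraction to iterated multiplication and iterated addition of $n^{O(1)}$-bit integers. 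Both of those are in $\TCz$ by the classical result of Hesse et al.\ (see \cite{HAB}). Multiplying by $\bar g$ modulo $x^{\,n-m+1}$ and the final multiplication $qf$ and subtraction from $g$ are again bounded-fan-in applications of the same primitives.

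The one delicate point, which is the main obstacle, is bounding the bit-complexity of the intermediate values. Naively the coefficients of $h(x)^i$ can grow as $(n-m+1)^i\cdot 2^{bi}$, which for $i$ up to $n-m$ and $b = n^{O(1)}$ is an astronomical number---but crucially its bit-length is still $O((n-m)(b + \log n)) = n^{O(1)}$. Thus all intermediate rationals fit within polynomial precision, and the $\TCz$ bounds on iterated arithmetic genuinely apply. A final sanity check is needed to guarantee that $r = g - qf$ has degree strictly less than $m$; this is automatic from uniqueness of polynomial division once $\bar q \equiv \bar g \bar f^{-1} \pmod{x^{\,n-m+1}}$ is produced correctly. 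Assembling these constant-depth layers gives the claimed $\TCz$-circuit for $\Div(n,m,b)$ when $m < n$ and $b = n^{O(1)}$.
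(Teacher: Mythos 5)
Your proposal is correct and follows essentially the same route as the paper: it is the Kung--Sieveking algorithm, reversing the polynomials, computing the truncated power-series inverse of the reversed divisor as the geometric sum $\sum_{i=0}^{n-m}(1-\bar f)^i$, multiplying by the reversed dividend to obtain $q$, and recovering $r = g - qf$, with all arithmetic placed in $\TCz$ via iterated integer multiplication/addition from \cite{HAB}. The only difference is that you spell out the correctness (the congruence modulo $x^{n-m+1}$) and the polynomial bit-length bounds explicitly, whereas the paper delegates these to \cite{ABD} and \cite{HAB}.
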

\begin{proof} Let $f(x) = \sum_{i=0}^m{ a_i x^i}$, 
$g(x) = \sum_{i=0}^n{b_i x^i}$,
$r(x) = \sum_{i=0}^{m-1}{r_i x^i}$ and $q(x) = \sum_{i=0}^{n-m}{q_i x^i}$. 
Since $f, g$ are monic, we have $a_m = b_n = 1$.
Denote by $f_R(x), g_R(x), r_R(x)$ and $q_R(x)$ respectively the polynomial with the $i$-th
coefficient $a_{m-i}, b_{n-i}, r_{m-i-1}$ and $q_{n-m-i}$ respectively. 
Then note that
$x^m f(1/x) = f_R(x)$, $x^n g(1/x) = g_R(x)$, $x^{n-m} q(1/x) = q_R(x)$ and $x^{m-1} r(1/x) = r_R(x)$.

We use the Kung-Sieveking algorithm (as implemented in \cite{ABD}). 
The algorithm is as follows:
\begin{enumerate}
 \item Compute $\tilde{f}_R(x) = \sum_{i=0}^{n-m} (1 - f_R(x))^i$ via interpolation.
 \item Compute $h(x) = \tilde{f}_R(x) g_R(x) = c_0 + c_1x + \ldots + c_{d(n-m) + n}x^{m(n-m) + n}$.
  from which the coefficients of $q(x)$ can be obtained as $q_i = c_{m(n-m) + n - i}$.
 \item Compute $r(x) = g(x) - q(x)f(x)$.
\end{enumerate}
The proof of correctness of the algorithm is identical to that in \cite{ABD}.
The proof of the lemma is immediate because polynomial product is in $\TCz$
from \cite{HAB}.
\end{proof}

\begin{lemma}\label{lem:div}
$\Div(k,l,b)$ $\ACz$-reduces to  $\Interp(kl, kb)$
 \end{lemma}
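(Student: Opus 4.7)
The plan is to invoke the Kung--Sieveking algorithm from Lemma~\ref{lem:KungSieveking} in an $\ACz$-reducible fashion with a single call to the $\Interp$ oracle. Recall that the algorithm computes the quotient $q$ by (i) producing the auxiliary polynomial $\tilde f_R(x) = \sum_{i=0}^{k-l} (1-f_R(x))^i$ via interpolation; (ii) multiplying $\tilde f_R(x) \cdot g_R(x)$ and reading off the appropriate high-order coefficients as the reversed quotient $q_R$; and finally (iii) computing $r(x) = g(x) - q(x)f(x)$ if the remainder is also needed.

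For step (i), the polynomial $\tilde f_R$ has degree at most $(k-l)l \leq kl$, so I will evaluate it at $kl+1$ distinct rational points of the form $x_j = j/(3kl)^2$, chosen small so that all intermediate values remain of bounded magnitude (analogous to the argument in the proofs of Lemmata~\ref{lem:CayleyHamilton} and~\ref{lem:powToDet}), and then invoke a single instance of $\Interp(kl, kb)$ on these evaluations to recover the coefficients of $\tilde f_R$. Evaluating $\tilde f_R$ at a single point $x_j$ reduces to first computing $y_j = 1 - f_R(x_j)$, which is a sum of at most $l+1$ products of $b$-bit rationals, and then computing $\sum_{i=0}^{k-l} y_j^i$, an iterated product plus an iterated sum over $O(k)$ polylogarithmic-bit operands. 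Since $k, l$ and $b$ are polylogarithmic, each evaluation is a standard arithmetic procedure on a polylogarithmic-length input that sits in $\Log$ (in fact in $\TCz$), so by Lemma~\ref{lem:borrowed} all $kl+1$ evaluations together can be produced in $\ACz$.

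Steps (ii) and (iii) involve multiplying and subtracting polynomials of polylogarithmic degree whose coefficients are polylogarithmic-bit rationals. Each coefficient of such a polynomial product is an iterated sum of at most $k$ products of polylog-bit rationals, and is therefore computable by a $\Log$-algorithm on a polylogarithmic-length input, which by Lemma~\ref{lem:borrowed} lies in $\ACz$. Hence, all the pre- and post-processing around the single $\Interp(kl, kb)$ call is realisable by an $\ACz$-circuit, giving the claimed $\ACz$-reduction from $\Div(k,l,b)$ to $\Interp(kl, kb)$.

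The main technical obstacle is the bit-size bookkeeping: I need to ensure that the evaluations fed to $\Interp$ fit the advertised $kb$-bit budget and that all intermediate quantities stay within the polylogarithmic regime so that Lemma~\ref{lem:borrowed} continues to apply. The choice of small rationals $x_j = j/(3kl)^2$ keeps all intermediate values bounded in magnitude by a small constant, while the polylogarithmic bounds on $k, l$ and $b$ keep every intermediate bit-length polylogarithmic, so the $\ACz$ upper bound propagates through every arithmetic step without difficulty.
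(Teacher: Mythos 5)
Your proposal is correct and follows essentially the same route as the paper: both identify the interpolation step of the Kung--Sieveking algorithm (Lemma~\ref{lem:KungSieveking}) as the single $\Interp$ call, computing $\tilde f_R$ of degree at most $(k-l)l \leq kl$ with $O(kb)$-bit evaluations, with the remaining arithmetic on polylogarithmic-length inputs absorbed into $\ACz$ via Lemma~\ref{lem:borrowed}. Your write-up is in fact more explicit than the paper's about why the evaluations and the polynomial products in steps (ii) and (iii) are $\ACz$-computable, but the underlying argument is the same.
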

 \begin{proof}
 In the first step of the algorithm from the proof of 
Lemma~\ref{lem:KungSieveking},
we need to interpolate a polynomial of degree at most $(k-l)l$.
Also, the coefficients of the polynomial are rationals with rationals that
are $(k-l)b$ bits long. Notice that
we do not require the coefficients of the polynomial to be smaller than $1$.
 \end{proof}
 
\subsection{Interpolation error analysis}

The following lemma shows that no precision is lost 
during each call to $\Interp$.

\begin{lemma} \label{lem:app:errorInterp}
Let $f(z)$ be a polynomial of degree $d$ with entries that are
rationals not necessarily smaller than $1$. Suppose, $z_i = \frac{i}{(3d)^2}$ for 
$i \in \{0,\ldots,d\}$ are $d+1$ values. If we know $B$-bit approximations
$\tilde{f}_i$ to the values $f(z_i)$, then the interpolant of these 
values is a function $\tilde{f}$ whose coefficients are at least 
$B$-bit approximations of the corresponding coefficients of $f$.
\end{lemma}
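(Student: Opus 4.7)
The plan is to invoke the Lagrange interpolation formula at the nodes $z_0,\ldots,z_d$. Writing both the true polynomial and its reconstruction in the Lagrange basis,
\[
f(z) = \sum_{i=0}^d f(z_i)\, L_i(z), \qquad \tilde{f}(z) = \sum_{i=0}^d \tilde{f}_i\, L_i(z),\qquad L_i(z) = \prod_{j\neq i}\frac{z - z_j}{z_i - z_j},
\]
and subtracting, the error polynomial is $\tilde{f}(z) - f(z) = \sum_i (\tilde{f}_i - f(z_i)) L_i(z)$, so
\[
|[z^k](\tilde{f} - f)| \;\leq\; \max_i |\tilde{f}_i - f(z_i)| \cdot \sum_{i=0}^d |[z^k] L_i(z)| \;\leq\; 2^{-B} \cdot \sum_{i=0}^d |[z^k] L_i(z)|.
\]
The whole argument then reduces to bounding the sum of absolute coefficients of the Lagrange basis polynomials for the specific nodes $z_i = i/(3d)^2$.

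For these nodes, the denominator is computed exactly using $z_i - z_j = (i-j)/(3d)^2$, giving $|\prod_{j\neq i}(z_i - z_j)| = i!(d-i)!/(3d)^{2d}$. For the numerator $\prod_{j\neq i}(z - z_j)$, the coefficient of $z^k$ is a signed elementary symmetric polynomial of the remaining $z_j$'s, and since each $z_j \leq 1/(9d)$ I would bound it by $\binom{d}{k}(1/(9d))^{d-k}$. Combining, summing over $i$ via $\sum_i 1/(i!(d-i)!) = 2^d/d!$, and using $d! \geq (d/e)^d$, I expect to obtain a bound of the shape $\sum_i |[z^k] L_i(z)| \leq (4e)^d (9d)^k \leq 2^{O(d \log d)}$ for $k \leq d$.

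The key point, and the reason these particular nodes $z_i = i/(3d)^2$ were chosen, is that clustering the evaluation points near the origin is exactly what keeps the elementary symmetric polynomials appearing in the numerators of the $L_i$ small; without this, the factor $(3d)^{2d}$ coming from the denominators would swamp the cancellation and blow up the bit loss considerably. The main obstacle is therefore the careful bookkeeping needed to confirm that the resulting $O(d\log d)$-bit amplification is absorbed into the precision budget. Since every invocation of $\Interp$ in the paper has $d$ at most polylogarithmic in $n$ (in fact of the form $\log^{O(1)}{n}$) while $B$ is polynomial in the same parameters, the loss is indeed negligible, so $\tilde{f}$'s coefficients are $B$-bit approximations of $f$'s in the sense used throughout the paper.
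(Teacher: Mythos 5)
Your route is in substance the same as the paper's: the coefficients of the Lagrange basis polynomials $L_i$ are exactly the entries of the inverse Vandermonde matrix $V^{-1}$ that the paper analyses, so bounding $\sum_i\lvert[z^k]L_i\rvert$ is the same task as bounding the row sums of $\lvert V^{-1}\rvert$. The difference is quantitative, and here your bookkeeping is the correct one while the paper's is not. The paper claims every entry of $V^{-1}$ is at most $d!/(3d)^{2d}<1$, resting on an incorrect ``largest monomial'' bound for the cofactors of $V$; in fact the leading coefficient of $L_i$ is $1/\prod_{j\neq i}(z_i-z_j)=\pm(3d)^{2d}/(i!\,(d-i)!)$, which is enormous. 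Already for $d=1$ (nodes $0$ and $1/9$) one has $L_0(z)=1-9z$, so perturbing the two values by $\pm 2^{-B}$ can shift the linear coefficient by $18\cdot 2^{-B}$. Hence the lemma's conclusion ``at least $B$-bit approximations'' is false as stated; what is true is precisely what you derive, namely an amplification factor of $\sum_i\lvert[z^k]L_i\rvert\le(4e)^d(9d)^k=2^{O(d\log d)}$, i.e.\ a loss of $O(d\log d)$ bits per call to $\Interp$. Your individual estimates check out: $\lvert\prod_{j\neq i}(z_i-z_j)\rvert=i!\,(d-i)!/(3d)^{2d}$ exactly, the coefficient of $z^k$ in $\prod_{j\neq i}(z-z_j)$ is a signed elementary symmetric polynomial bounded by $\binom{d}{k}(9d)^{-(d-k)}$, and $\sum_i\binom{d}{i}=2^d$ together with $d^d/d!\le e^d$ gives the stated bound.

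So your proposal does not prove the lemma as written, but that is the lemma's fault, not yours; you should state and prove the corrected version (loss of $O(d\log d)$ bits) rather than hedge with ``I would bound'' and ``I expect''. Your closing remark is the right way to salvage the application: every invocation of $\Interp$ in the paper has degree $d=\log^{O(1)}n$ while the precision budget is a larger power of $\log n$, so an $O(d\log d)=\log^{O(1)}n$ bit loss is absorbed. Note, however, that the error accounting in the proof of Theorem~\ref{thm:matrix-pow-approx} explicitly assumes ``no loss of precision in each call to $\Interp$'' and only budgets an $O(1)$-bit loss per batch; with the corrected lemma that accounting must be redone (a polylogarithmic loss per batch, still leaving $\Omega(\log n)$ usable batches out of a $\log^2 n$-bit budget before muddling resets the state). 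One small caveat on your motivational aside: clustering the nodes near the origin does keep the elementary symmetric polynomials small, but it simultaneously shrinks the denominators $\prod_{j\neq i}(z_i-z_j)$, and the net effect is still the $2^{O(d\log d)}$ amplification you compute --- the choice of tiny nodes is driven at least as much by the magnitude constraints needed elsewhere (e.g.\ keeping the evaluated matrix entries below $(3l)^{-1}$ for the determinant lemmas) as by interpolation stability.
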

\begin{proof}
(Lagrange) interpolation can be viewed as computing $V^{-1}F$ where $V$ is
 a $(d+1)\times (d+1)$ Vandermonde matrix
\footnote{We assume that the indices of the Vandermonde matrix run in
$\{0,\ldots,d\}$ and that $0^0 = 1$ for convenience.},
such that $V_{ij} = z_i^j$ while $F_i = f(z_i)$ are entries of a column vector.
 The determinant
of the Vandermonde matrix is $\prod_{0\leq j < i \leq d}(z_i- z_j)$. This equals
$\prod_{0\leq j<i\leq d}{\frac{i-j}{(3d)^2}} = \left(\prod_{i=1}^d{i!}\right)\frac{1}{(3d)^{d(d-1)}}$.
On the other hand, the various co-factors are upper bounded in magnitude 
by $d! \prod_{i = 1}^d{z_i^{d-i+1}} = d! \prod_{i=1}^d\frac{i^{d-i+1}}{(3d)^{2(d-i+1)}} = d! \frac{1^d 2^{d-1} \ldots (d-1)^2 d}{(3d)^{2\sum_{i=1}^d{i}}} = d! \frac{\prod_{i=1}^d{i!}}{(3d)^{d(d+1)}}$ 
by considering the monomial with the largest magnitude. 

Thus an entry of the inverse i.e. the 
ratio of a co-factor and the determinant is upper bounded by:
$\frac{d! (3d)^{d(d-1)}}{(3d)^{d(d+1)}} = \frac{d!}{(3d)^{2d}} < 1$.

Hence the coefficients of $V^{-1}(F-\tilde{f})$ (where $\tilde{f}$ is the
column vector with entries $\tilde{f}_i$)
are bounded by $2^{-B}$ completing the proof. Notice that we do not use the 
magnitude of $f(z_i)$ or of $\tilde{f}_i$ in the proof but only that their
difference is small.
\end{proof}  

\subsection{Putting it together with error analysis}
We now reach the main theorem of this Section:

\begin{theorem}
Let $T$ be an $(n,\log{n},\log^2{n},\frac{\log{n}}{\log{\log{n}}})$-dynamic adjacency matrix.
Then we can maintain in $\DynACz$, a matrix $\tilde{T}$ such that 
$|\tilde{T} - T^{\log{n}}| < \frac{1}{n^{\omega(1)}}$.
\label{thm:matrix-pow-approx}
\end{theorem}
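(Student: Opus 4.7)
The plan is to deploy the muddling framework of Lemma~\ref{lem:muddle}. I would supply the static $\ACo$ routine via Lemma~\ref{lem:staticLogPower} (which computes $T^{\log n}$ from scratch in $\ACo$ given rational entries with polynomially many bits), and construct a dynamic $\ACz$ program $\mathcal{P}$ that approximately maintains $T^{\log n}$ for at least $\Omega(\log n)$ consecutive batches of size $l = O(\log n/\log\log n)$. Combining the two via Lemma~\ref{lem:muddle} then yields the claimed $\DynACz$ algorithm valid for arbitrarily many batches.

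To build $\mathcal{P}$, I would maintain the full generating function $A_H = \sum_{i=0}^{\log n}(xT)^i$, so that $T^{\log n}$ is read off as the coefficient of $x^{\log n}$. A batch of $l$ edge changes is then propagated through the chain of reductions already established in the paper. Lemma~\ref{lem:preproc} sends $\DynMatPow$ to $\DynBipMatPow$ on a $2n \times 2n$ matrix; Lemma~\ref{lem:bip-to-mat} converts the bipartite update into a static $\MatPow(l,\log n, b, 2\log n)$ instance on an $l \times l$ submatrix via the Hesse-style correction formula. This static power is unpacked by Lemma~\ref{lem:CayleyHamilton} into a smaller $\MatPow(l,0,lb,l)$, a $\DetPoly$, an $\Interp$ and a $\Div$. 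The $\MatPow(l,0,lb,l)$ is reduced further by Lemma~\ref{lem:powToDet} to a $\DetPoly$, a $\Det$ and another $\Interp$; the $\DetPoly$ instances are reduced by Lemma~\ref{lem:detInterp} to $\Det$ and $\Interp$; and the $\Div$ instance is reduced by Lemma~\ref{lem:div} to a single $\Interp$.

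With the parameter choices $l = O(\log n/\log\log n)$, $k = O(\log n)$, $d = O(\log n)$ and $b = O(\log^2 n)$, every quantity appearing in the reductions is polylogarithmic in $n$. Consequently every surviving $\Det$ call is on an almost-logarithmically sized matrix with polylog-bit entries and falls in $\ACz$ by Lemma~\ref{lem:rationalDMVZ}, while every $\Interp$ call is a polylog-length logspace computation and hence lies in $\ACz$ by Lemma~\ref{lem:borrowed}. Composing the constantly many $\ACz$ reductions produces a single $\ACz$ circuit that executes one batch update of $\mathcal{P}$.

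The principal subtlety is the error analysis. A single batch through the Hesse-style update loses only one bit of precision by Lemma~\ref{lem:error-hesse}, while the inner $\Det$ and $\Interp$ sub-computations do not inflate that error further, by Lemmas~\ref{lem:app:errorDet} and~\ref{lem:app:errorInterp}. Starting from $b = \log^2 n$ bits of precision, delivered initially by the static $\ACo$ computation and then refreshed every $\Omega(\log n)$ steps through Lemma~\ref{lem:muddle}, the dynamic program retains approximation error $2^{-\log^2 n + O(\log n)} = n^{-\omega(1)}$ for at least $\Omega(\log n)$ batches, which is precisely the regime Lemma~\ref{lem:muddle} demands. The main obstacle I expect is the bookkeeping: one must verify both that the bit-length parameters propagated down the reduction chain stay polylogarithmic at every level (so the $\ACz$-inclusion lemmas actually apply) and that the accumulated error over a full $\Omega(\log n)$-batch window never overruns the starting precision before the $\ACo$ catch-up completes.
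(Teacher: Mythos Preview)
Your proposal is correct and follows essentially the same route as the paper: the identical chain of reductions (Lemmas~\ref{lem:preproc}, \ref{lem:bip-to-mat}, \ref{lem:CayleyHamilton}, \ref{lem:powToDet}, \ref{lem:detInterp}, \ref{lem:div}) terminating in $\Det$ and $\Interp$ instances handled by Lemmas~\ref{lem:rationalDMVZ} and~\ref{lem:borrowed}, all wrapped in the muddling scheme of Lemma~\ref{lem:muddle} with the static $\ACo$ routine of Lemma~\ref{lem:staticLogPower}. The one point the paper makes explicit that you leave buried in your ``bookkeeping'' caveat is that the bit-length of the stored entries of $A_H$ grows by a polylogarithmic factor after every batch and must be \emph{truncated} back to $\log^2 n$ bits; it is this truncation, together with Lemma~\ref{lem:error-hesse}, that accounts for the $O(1)$-bit precision loss per batch and hence the $\Omega(\log n)$-batch window before a fresh $\ACo$ recomputation is needed.
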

\begin{proof}
We use the reductions presented in Lemmas~\ref{lem:preproc},~\ref{lem:bip-to-mat},
~\ref{lem:CayleyHamilton},~\ref{lem:powToDet},~\ref{lem:detInterp} and \ref{lem:div} to prove the result.
\begin{eqnarray*}
\DynMatPow(n,d,b,k,l)           &         & \\
(Lemma~\ref{lem:preproc})	& \leqacz & \DynBipMatPow(2n,d,b,2k,l) \\
(Lemma~\ref{lem:bip-to-mat})      & \leqacz & \MatPow(l, d, b, 2k) \\
(Lemma~\ref{lem:CayleyHamilton})& \leqacz & \MatPow(l, 0, lb, l) \wedge \\
		                &         & \DetPoly(l,1,lb+8k^2d) \wedge \\
				&         & \Interp(2dk,2kl^2b + 8k^3ld) \wedge  \\
				&         & \Div(2k,l,l^2b + 8k^2ld) \\
(Lemma~\ref{lem:powToDet})	& \leqacz & \DetPoly(l,1,lb) \wedge \\
				&         & \Det(l+1,lb,(l+1)^{-1}) \wedge\\
				&         & \Interp(0,lb) \wedge \\
		                &         & \DetPoly(l,1,lb+8k^2d) \wedge\\
				&         & \Interp(2dk,2kl^2b+8k^3ld) \wedge\\
				& 	  & \Div(2k,l,l^2b+8k^2ld) \\
(Lemma~\ref{lem:detInterp})     & \leqacz & \Interp(dl,ldb) \wedge\\
				&      	  & \Det(l,ldb,(l+1)^{-1}) \wedge \\
		      		&	  & \Interp(2dk,2kl^2b + 8k^3ld)\wedge\\
(Lemma~\ref{lem:div})	        &	  & \Interp(2kl,2kl^2b+8k^3ld) \\
		                & \equiv  & \Interp(2\log^2{n}, \\
				&         & \frac{10\log^{5}{n}}{(\log{\log{n}})}) \wedge \\
		                &	  & \Det(\frac{\log{n}}{\log{\log{n}}}, 
				       \frac{\log^{4}{n}}{(\log{\log{n}})},
				       (\frac{\log{n}}{\log{\log{n}}}+1)^{-1}) \\
\end{eqnarray*}

Each $\DynMatPow$ call boils down to a number of $\Det,\Interp$ calls as above.

Though there is no loss of precision in each call to $\Det$
(Lemma~\ref{lem:app:errorDet}) and $\Interp$ (Lemma~\ref{lem:app:errorInterp}),
in Lemma~\ref{lem:bip-to-mat} we lose $O(1)$-bits
of precision (See Lemma~\ref{lem:error-hesse}).  However, the length of
the bit representation grows by a factor of $9\frac{\log^3{n}}{\log{\log{n}}}$ 
at every batch. Thus to 
keep the number of bits under control we need to truncate the matrix at 
$\log^2{n}$-bits again so that now the powered matrix is now a $\log^2{n}-O(1)$-bit
approximation. This $O(1)$ will deteriorate at every step so that we can 
afford to perform at least $\Omega(\log{n})$ steps before we recompute the 
results from scratch, that is, do muddling.

We can do muddling by invoking Lemma~\ref{lem:muddle} where we pick 
$\mathcal{A}$ to be the algorithm from Lemma~\ref{lem:staticLogPower} and with
the above sequence of reductions as the dynamic program $\mathcal{P}$ for 
handling a batch (or actually two batches -- one old and one new) of changes.
\end{proof}

\section{Conclusion}

In this paper we solve a gap version of the expansion testing problem, wherein we want to test if the expansion is greater than $\alpha$ or less than a $\alpha'$. The dependence of $\alpha'$ on $\alpha$ in this paper is due to the approach of using random walks and testing the conductance. It is natural to ask if there is an alternate method leading to a better dependence. A more natural question is whether we can maintain an approximation of the second largest eigenvalue of a dynamic graph. 

An alternative direction of work would be to improve on the number of updates 
allowed per round. In this paper, we show how to test expansion when almost
logarithmic ($O(\frac{\log{n}}{\log{\log{n}}})$) changes are allowed per round. 
The largest determinant that we can compute in $\ACz$ is at most of an almost 
logarithmic size and is the bottle neck that prevents us from improving this 
bound. We know of another way (obtained via careful adaptation of a proof in \cite{Nisan94}) to approximate 
the powers of the transition matrix when $\log^{O(1)} n$ changes are allowed 
per round. Unfortunately, we don't get a strong enough approximation that 
leads to an algorithm for approximating conductance.

\section*{Acknowledgements} SD would like to thank Anish Mukherjee, 
Nils Vortmeier and Thomas Zeume for many interesting and illuminating 
conversations over the years and in particular for discussions that ultimately
crystallized into Lemma~\ref{lem:genHesse}. We would like to thank Eric Allender for clarification regarding previous work.
 SD was partially funded by a grant from Infosys foundation and SERB-MATRICS grant MTR/2017/000480. AT was partially funded by a grant from Infosys foundation.

\bibliographystyle{alpha}
\bibliography{ev}

\end{document}